\DeclareMathOperator*{\Exp}{\ensuremath{{\mathbb{E}}}}
\DeclareMathOperator*{\Prob}{\ensuremath{\textnormal{Pr}}}
\renewcommand{\Pr}{\Prob}
\newcommand{\prob}[1]{\Pr\paren{#1}}
\newcommand{\card}[1]{|#1|}
\newcommand{\paren}[1]{\left ( #1 \right )}
\newcommand{\bracket}[1]{\left [ #1 \right ]}
\newcommand{\expect}[1]{\Exp\bracket{#1}}
\newcommand{\var}[1]{\textnormal{Var}\bracket{#1}}
\newcommand{\set}[1]{\ensuremath{\left\{ #1 \right\}}}
\newcommand{\poly}{\mbox{\rm poly}}
\newcommand{\eps}{\varepsilon}
\renewcommand{\algorithmicrequire}{\textbf{Input:}}
\renewcommand{\algorithmicensure}{\textbf{Output:}}
\theoremstyle{definition}
\newtheorem{definition2}{Definition}
\theoremstyle{definition}
\newtheorem{example2}[definition2]{Example}
\theoremstyle{definition}
\newtheorem{property}[definition2]{Property}
\theoremstyle{definition}
\newtheorem{theorem2}[definition2]{Theorem}
\theoremstyle{definition}
\newtheorem{lemma2}[definition2]{Lemma}
\theoremstyle{definition}
\newtheorem{corollary2}[definition2]{Corollary}
\theoremstyle{definition}
\newtheorem{observation2}[definition2]{Observation}
\theoremstyle{definition}
\newtheorem{claim2}[definition2]{Claim}
\theoremstyle{definition}
\newtheorem{note2}[definition2]{Note}
\theoremstyle{definition}
\newtheorem{remark2}[definition2]{Remark}
\theoremstyle{definition}
\newtheorem{research2}[definition2]{Research Direction}
\theoremstyle{definition}
\newtheorem{conjecture2}[definition2]{Conjecture}
\theoremstyle{definition}
\theoremstyle{definition}
\newtheorem{assumption}[definition2]{Assumption}
\theoremstyle{definition}
\newenvironment{tbox}{\begin{tcolorbox}[
		enlarge top by=3pt,
		enlarge bottom by=3pt,
		boxsep=0pt,
		left=4pt,
		right=4pt,
		top=10pt,
		arc=0pt,
		boxrule=1pt,toprule=1pt,
		colback=blue!2,
		colframe=blue,
		]
	}
{\end{tcolorbox}}
\newenvironment{tbox2}{\begin{tcolorbox}[
		enlarge top by=3pt,
		enlarge bottom by=3pt,
		breakable,
		boxsep=0pt,
		left=4pt,
		right=4pt,
		top=10pt,
		arc=0pt,
		boxrule=1pt,toprule=1pt,
		colback=blue!2,
		colframe=blue,
		]
	}
	{\end{tcolorbox}}
\newtheorem{mdalg}{Algorithm}
\title{Decremental Matching in General Weighted Graphs}
\author{Aditi Dudeja\thanks{aditi.dudeja@plus.ac.at.}}
\affil{Department of Computer Science, University of Salzburg}
\date{}
\begin{document}
\maketitle

\abstract{In this paper, we consider the problem of maintaining a $(1-\eps)$-approximate \emph{maximum weight matching} in a dynamic graph $G$, while the adversary makes changes to the edges of the graph. In the fully dynamic setting, where both edge insertions and deletions are allowed, Gupta and Peng \cite{GP13} gave an algorithm for this problem with an update time of $\Tilde{O}_{\eps}(\sqrt{m})$. \\ \\ 
We study a natural relaxation of this problem, namely the decremental model, where the adversary is only allowed to delete edges. For the cardinality version of this problem in general (possibly, non-bipartite) graphs, \cite{AssadiBD22} gave a decremental algorithm with update time $O_{\eps}(\poly(\log n))$. However, beating $\Tilde{O}_{\eps}(\sqrt{m})$ update time remained an open problem for the \emph{weighted} version in \emph{general graphs}.\\ \\
In this paper, we bridge the gap between unweighted and weighted general graphs for the decremental setting. We give a $O_{\eps}(\poly(\log n))$ update time algorithm that maintains a $(1-\eps)$ approximate maximum weight matching under adversarial deletions.\footnote{Independently and concurrently, Chen, Sidford, and Tu \cite{ChenST23} obtained faster algorithms for decremental weighted matching in general graphs. Their work also improves on the state of the art for decremental unweighted matching in general graphs (that is, \cite{AssadiBD22}).}Like the decremental algorithm of \cite{AssadiBD22}, our algorithm is randomized, but works against an adaptive adversary. It also matches the time bound for the cardinality version upto dependencies on $\eps$ and a $\log R$ factor, where $R$ is the ratio between the maximum and minimum edge weight in $G$. }

\section{Introduction}

In dynamic algorithms, the main goal is to maintain a key property of the graph while an adversary makes changes to the edges of the graph. An algorithm is called incremental if it only handles edge insertions, and decremental if it handles only edge deletions and fully dynamic if it handles both edge insertions and deletions. \\ \\
The problem of maintaining a $(1-\eps)$-approximation to the maximum matching in a dynamic graph is a well-studied one. In the fully dynamic setting, the best known update time is $O(m^{0.5-\Omega_{\eps}(1)})$ (see \cite{BKS23}), and the conditional lower bounds proved in the \cite{HKNS15,KPP16} suggest that this is a hard barrier to break through. Consequently, several relaxations of this problem have been studied. For example, one line of research has shown that we can get considerably faster update times if we settle for large approximation factors. Another research direction has been to consider more relaxed incremental or decremental models. In the incremental setting, there have been a series of upper and lower bound results (see \cite{Dahlgaard16} and \cite{Gupta2014}), culminating in the results of \cite{GLSSS19} (who give a $O(m\cdot (\nicefrac{1}{\eps})^{O(\nicefrac{1}{\eps})})$ total update time algorithm for $(1-\eps)$-approximate matching in \textbf{general graphs}) and of \cite{BlikstadK23} (who give a $O(m\cdot \poly(1/\eps))$ total update time algorithm for $(1-\eps)$-approximate matching in \textbf{bipartite graphs}). In contrast, for decremental matching \cite{BPT20,JambulapatiJST22} gave an $O(m\cdot\poly(\log n,\nicefrac{1}{\eps}))$ total update time algorithm for $(1-\eps)$-approximate matching in bipartite graphs, and \cite{AssadiBD22} extended this algorithm to the case of general graphs by giving an algorithm for $(1-\eps)$-approximate decremental matching with total time $O_{\eps}(m\cdot \poly(\log n))$, thus completing the picture for approximate cardinality matching in the partially dynamic setting. \\ \\
For the weighted case (in both fully and partially dynamic settings) the picture is less clear. \cite{BernsteinDL21} showed that for the bipartite case, it is possible to reduce weighted matching to unweighted matching with an overhead of $(1-\eps)$ in the approximation factor, and at a loss of a factor of $\log W$ and $(\nicefrac{1}{\eps})^{O(\nicefrac{1}{\eps})}$ in the update time. Recently, \cite{BhattacharyaKS23b} showed that this exponential-in-$1/\eps$ dependence was avoidable for bipartite graphs in the partially dynamic setting. \\ \\
In contrast, for non-bipartite graphs, we don’t have such a general reduction. Thus, while there has been significant recent progress on unweighted matching, the state-of-art for weighted matching lags far behind. Our main contribution in this paper, is to close this gap between weighted and unweighted matching for general graphs in the decremental dynamic setting:



\begin{restatable}{theorem2}{mainthm}\label{thm:main}
Let $G$ be a weighted graph with weight function $w$ on the edges of the graph and let $\eps\in (0,\nicefrac{1}{2})$. Then, there is a decremental algorithm with total update time $O_{\varepsilon}(m\cdot\poly(\log n)\cdot\log R)$ (amortized $\Tilde{O}_{\varepsilon}(1)$) that maintains an integral matching $M$ with $w(M)\geqslant (1-\varepsilon)\cdot \textsf{mwm}(G)$, with high probability, where $G$ refers to the current version of the graph. The algorithm is randomized but works against an adaptive adversary. The dependence on $\eps$ is $2^{{\paren{\nicefrac{1}{\varepsilon}}}^{O\paren{\nicefrac{1}{\varepsilon}}}}$.
\end{restatable}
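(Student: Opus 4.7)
My plan is to lift the decremental unweighted matching algorithm of \cite{AssadiBD22} for general graphs to the weighted setting via a weight-bucketing reduction, in the spirit of \cite{BernsteinDL21} (which carries out the analogous reduction in the bipartite case).

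First, I would partition the edges of $G$ into $L = O(\log_{1+\eps} R)$ geometric weight classes, and for each threshold $\tau_i = (1+\eps)^i \cdot w_{\min}$ consider the threshold subgraph $G_{\geqslant i}$ consisting of all edges of weight at least $\tau_i$. A standard scaling argument shows that $\textsf{mwm}(G) = \sum_i (\tau_{i+1}-\tau_i) \cdot \nu(G_{\geqslant i}) \pm O(\eps)\cdot \textsf{mwm}(G)$, where $\nu(\cdot)$ denotes the maximum matching size. Invoking the algorithm of \cite{AssadiBD22} on each $G_{\geqslant i}$ maintains a $(1-\eps)$-approximate maximum cardinality matching $M_i$, at a per-level cost of $O_{\eps}(m_i \cdot \poly(\log n))$ total update time, which sums to the claimed $O_\eps(m\cdot\poly(\log n)\cdot\log R)$ bound.

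The key step is to combine the per-level matchings $\{M_i\}_{i=1}^L$ into a single \emph{integral} matching $M$ with $w(M)\geqslant (1-O(\eps))\cdot\textsf{mwm}(G)$. In bipartite graphs \cite{BernsteinDL21} accomplish this via LP duality: the tuple of matchings encodes a fractional matching whose support can be integrally rounded without loss by K\"onig's theorem. In general graphs this breaks down, since the basic matching LP has an integrality gap of $3/2$ and one must respect blossom (odd-set) constraints.

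The main obstacle is thus the combining/rounding step in the non-bipartite setting. My approach would be to maintain, alongside each $M_i$, a laminar family of blossoms that captures enough of the Gallai--Edmonds structure of $G_{\geqslant i}$ to support a blossom-aware rounding analogous to the one used in the static $(1-\eps)$-approximate weighted matching algorithm of Duan--Pettie. The technical difficulty is to make this combining step compatible with decremental updates: after each deletion, several $M_i$'s may change simultaneously, and the output integral matching must be repaired with small amortized recourse. Controlling this recourse while paying only an $O(\log R)$ factor on top of \cite{AssadiBD22} is, I expect, the real content of the proof, and plausibly what forces the $2^{(1/\eps)^{O(1/\eps)}}$ dependence on $\eps$ in the final update time.
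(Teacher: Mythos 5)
Your proposed route—bucket edges into threshold subgraphs $G_{\geqslant i}$, run the unweighted decremental algorithm of \cite{AssadiBD22} on each level, and combine—diverges fundamentally from the paper, and more importantly it contains a genuine error already in its first step, before you even reach the combining issue you flag.

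\textbf{The decomposition identity is false.} The claim that $\textsf{mwm}(G) = \sum_i (\tau_{i+1}-\tau_i)\cdot \nu(G_{\geqslant i}) \pm O(\eps)\cdot \textsf{mwm}(G)$ does not hold, even in bipartite graphs. Only the ``$\leqslant$'' direction holds (any matching $M$ has $w(M) \approx \sum_i (\tau_{i+1}-\tau_i)\,|M\cap E_{\geqslant i}| \leqslant \sum_i (\tau_{i+1}-\tau_i)\,\nu(G_{\geqslant i})$); the reverse fails by a constant factor. Concretely, take the path $a{-}b{-}c{-}d$ with $w(a,b)=w(c,d)=1$ and $w(b,c)=3$, unit-spaced thresholds $\tau_i=i$. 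Then $\textsf{mwm}(G)=3$ but $\nu(G_{\geqslant 1})+\nu(G_{\geqslant 2})+\nu(G_{\geqslant 3})=2+1+1=4$, a $4/3$ gap. There is no single matching (nor even a fractional matching) that is simultaneously near-maximum in every $G_{\geqslant i}$, so independently maintaining a near-maximum $M_i$ per level gives objects that cannot be stitched together without loss. The actual reduction in \cite{BernsteinDL21} for bipartite graphs is considerably more structured than independent bucketing, and the paper explicitly states that no such reduction is known in general graphs—this gap is the entire motivation for the present work, and your ``laminar family of blossoms supporting a blossom-aware rounding'' suggestion is a placeholder, not a proof.

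\textbf{What the paper actually does.} The proof does not reduce to unweighted matching. It first invokes the Gupta--Peng style reduction (\Cref{lem:guptapeng}) to pass to integer weights in $\{1,\ldots,W\}$ with $W=\gamma_\eps=(1/\eps)^{O(1/\eps)}$, losing only the $\log R$ factor and a $(1-\eps)$ in approximation. It then proves \Cref{lem:intermmain}: a decremental $(1-\eps)$-approximation for such bounded integer weights in $O(m\cdot 2^{W/\eps^2}\poly(\log n))$ total time. That inner result is obtained by a \emph{weighted} congestion-balancing framework: it maintains a weighted fractional matching $\vec{x}$ obeying edge capacities $\kappa$ \emph{and} odd-set constraints, and a subroutine $\textsc{WeightedM-or-E*}()$ (\Cref{lem:MorE}) that either outputs a heavy such $\vec{x}$ or a bottleneck set $E^*$ along which to raise capacities. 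Implementing this requires new weighted analogues of all three structural lemmas from \cite{AssadiBD22}: a sampling lemma (\Cref{lem:phase1}) proved via Bernstein/Chernoff rather than Tutte--Berge, a primal--dual ``extended Hall'' substitute (\Cref{lem:sampling2}), a dual-based characterization of $E^*$ (\Cref{lem:setE*}), plus a bespoke near-linear-time algorithm $\textsc{Weighted-Frac-Match}()$ for capacitated weighted bipartite fractional matching (min-cost-flow is too slow), lifted to general graphs via the bipartite double cover (\Cref{lem:Hungarian}), and a weighted sparsification/rounding step (\Cref{lem:sparsification}). The $2^{(1/\eps)^{O(1/\eps)}}$ dependence arises from substituting $W=\gamma_\eps$ into $2^{W/\eps^2}$ in \Cref{lem:intermmain}, not from recourse in a per-level combining step as you conjecture.
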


In comparison, the decremental algorithm for unweighted $(1-\eps)$ approximate matching, is also randomized but works against an adaptive adversary, and takes the same total time albeit without an additional loss of a $\log R$ factor and a dependence of $(\nicefrac{1}{\eps})^{O(\nicefrac{1}{\eps})}$ on $\eps$.

\subsection{Our Techniques}

Several unweighted dynamic algorithms maintain a matching sparsifier, which is a sparse subgraph of $G$ guaranteed to contain a large matching. As mentioned before, unlike in the bipartite case, there is no reduction from weighted to unweighted matchings in general graphs. So, the existing algorithms for weighted matchings in dynamic setting proceed by looking at weighted analogues of unweighted matching sparsifiers. For example, \cite{BernsteinDL21} consider weighted versions of the EDCS and kernels which were first discovered in the context of cardinality matching problem (see \cite{BS15,BS16,BHI15}). We use a similar approach as in \cite{BernsteinDL21} and port the structural results of \cite{AssadiBD22} for unweighted graphs to the setting of arbitrary weighted graphs.

\subsubsection{Structural Results for Decremental Matching}

The congestion balancing framework of \cite{BPT20} maintains a fractional matching that is robust to deletions. In order to ensure that the matching is robust and “spread out”, they impose a capacity function $\kappa$ on the edges of the graph. They repeatedly invoke a subroutine $\text{M-or-E*}()$ that does one of the following (here $\mu(G)$ refers to the size of maximum cardinality matching):

\begin{enumerate}
\item Either output a fractional matching $\vec{x}$, such that $\sum_{e\in E}x(e)\geqslant (1-\eps)\cdot \mu(G)$ or,
\item Output a set $E^*$ of bottleneck edges along which we can increase capacities: \label{prop:congestb}
\begin{enumerate}[label=(\roman*)]
\item $\kappa(E^*)$ is $O(\mu\log n)$\label{prop:congesti}
\item Every large matching in $G$ has $\eps\cdot\mu(G)$ edges in $E^*$. \label{prop:congestii}
\end{enumerate}
\end{enumerate} 
The authors of \cite{BPT20} use this subroutine to get an algorithm for decremental matching as follows: either \textsc{M-or-E*}() outputs a fractional matching $\vec{f}$ such that $\sum_{e\in E}\geqslant (1-\eps)\cdot \mu(G)$, in which case we can use a black-box of \cite{Wajc2020} to round this $\vec{f}$, or if there is no $\vec{f}$ satisfying this property, then it outputs a set of edges $E^*$ along which we can increase capacity. Note that, \ref{prop:congestb}\ref{prop:congesti} ensures that the total capacity increase is small, and \ref{prop:congestb}\ref{prop:congestii} ensures that we only increase capacity along edges necessary to form a large matching. Thus, they are able to show that the capacities remain small on average. \\ \\
Predictably in general graphs, implementing such a subroutine is challenging because of the blossom constraints which impose the following roadblocks:
\begin{enumerate}
\item An arbitrary fractional matching in a general graph can suffer from an integrality gap, which is not the case for bipartite graphs. Consequently, we need to be able to check if a given general graph has a large fractional matching obeying capacities and the blossom constraints. \label{item:issuea}
\item Once we know there is a large fractional matching obeying the above requirements, we need to efficiently find such a fractional matching. In bipartite graphs, this can be done use approximate max flows. However, for general graphs this correspondence between fractional matchings and flows is lost. \label{item:issueb}
\item Finally, in the case of bipartite graphs, if the fractional matching is small, the bottleneck edges $E^*$ along which we have to increase capacity correspond to a min-cut in the graph, however for the case of general graphs, this characterization is not clear.\label{item:issuec}
\end{enumerate}

The main contribution of \cite{AssadiBD22} is to address these challenges through three structural lemmas, we state these informally, and then state our weighted versions of these lemmas.

\begin{lemma2}\label{lem:detect}
Let $G$ be a graph and let $\mu(G)$ denotes the size of the optimal matching in $G$. Suppose $\kappa$ is the capacity function on the edges of the graph, and let $\mu(G,\kappa)$ denote the optimal fractional matching of $G$, obeying capacity function $\kappa$, and all odd set constraints. Sample every edge $e$ with $p(e) \propto \kappa(e)$ to create uncapacitated graph $G_s$. Then, $\mu(G_s)\geqslant \mu(G,\kappa)-\eps\cdot n$.
\end{lemma2}

The above lemma gives us a way to determine whether a given capacitated graph has a large fractional matching obeying blossom constraints by reducing this problem to the problem of finding an integral matching in a general graph. For the latter problem, there are efficient algorithms known. This takes care of Issue \ref{item:issuea}

\begin{lemma2}\label{lem:find}
Let $G$ be a graph with capacity function $\kappa$ on the edges of the graph. Let $G_s$ be the graph obtained by sampling edges independently with probability $p(e)\propto \kappa (e)$. Consider a matching $M$ of $G_s$ such that $\card{M}\geqslant (1-\eps)\cdot \mu(G_s)$. Let $M_H\subseteq M$ be the set of edges of high capacity. Let $V_H=V(M_H)$, Let $V_L=V\setminus V_H$. Compute a fractional matching $\vec{f}$ on the low capacity edges of $G[V_L]$, such that $\vec{f}$ obeys $\kappa$. Then, $\card{M_H}+\sum_{e\in E}f(e)\geqslant \mu(G_s)-\eps\cdot n$.
\end{lemma2}

Thus, by using integral matching algorithms, and approximate maximum flow, the above lemma helps us deal with Issue \ref{item:issueb}.

\begin{lemma2}\label{lem:increase}
The set of bottleneck edges of $E^*$ can be identified by considering the edges of $G$ that are not covered by the duals of the matching program of $G_s$. 
\end{lemma2}

The approximate matching algorithms given in the literature, also give us near feasible dual solution (see \cite{DP09}), thus, the above lemma, in combination with these algorithms gives us a way to deal with Issue \ref{item:issuec}.

\subsection{Weighted Analogues of Structural Results for Decremental Matching}

Similar to the algorithm \textsc{M-or-E*}() in the unweighted decremental setting, we design an algorithm \textsc{WeightedM-or-E*}() that runs in time $O(m)$, and does one of the following two things.

\begin{enumerate}
\item Either output a fractional matching $\vec{x}$, such that $\sum_{e\in E}w(e)\cdot x(e)\geqslant (1-\eps)\cdot \textsf{mwm}(G)$.
\item Output a set E* of bottleneck edges along which we can increase capacities:
\begin{enumerate}[label=(\roman*)]
\item $\sum_{e\in E^*}\kappa(e)\cdot w(e)=O(\textsf{mwm}(G)\cdot\log n)$
\item Every large matching in $G$ has $\eps\cdot \textsf{mwm}(G)$ weight in $E^*$.
\end{enumerate}
\end{enumerate}

The informal statements of the weighted versions of the unweighted lemmas are almost exactly the same:

\begin{lemma2}
Let $G$ be a graph and let $\textsf{mwm}(G)$ denotes the weight of the maximum weight matching in $G$. Suppose $\kappa$ is the capacity function on the edges of the graph, and let $\textsf{mwm}(G,\kappa)$ denote the maximum fractional matching of $G$, obeying capacity function $\kappa$, and all odd set constraints. Sample every edge $e$ with $p(e) \propto \kappa(e)$ to create uncapacitated graph $G_s$. Then, $\textsf{mwm}(G_s)\geqslant \textsf{mwm}(G,\kappa)-\eps\cdot n$.
\end{lemma2}

The proof of \Cref{lem:detect}, relies on the Tutte-Berge Theorem (\cite{Schrijver2003CombinatorialOP}), and the main impediment to proving the above lemma is that there is no weighted analogue of this lemma. Thus, we have to rely on other structural probabilistic tools to come up with a proof.

\begin{lemma2}
Consider a matching M of $G_s$ such that $w(M)\geqslant (1-\eps)\cdot \textsf{mwm}(G_s)$. Let $M_H$ be the set of edges of high capacity. Let $V_H=V(M_H)$, Let $V_L=V\setminus V_H$. Compute a weighted fractional $\vec{f}$ matching on the low capacity edges of $G[V_L]$, such that $\vec{f}$ obeys the capacity function $\kappa$, and the odd set constraints. Then, $\sum_{e\in E}f(e)\cdot w(e)+w(M_H)\geqslant  \textsf{mwm}(G_s)-\eps\cdot n$.
\end{lemma2}

The proof of the unweighted version of this theorem relies on the extended Hall's theorem, and similar to Tutte Berge theorem, there is no weighted version of this theorem. Thus, we have to come up with a generalization that relies on a primal-dual analysis. \\ \\
Note that in the unweighted version, we can easily find such a fractional matching on low capacity edges, using approximate max-flow. However, weighted fractional matchings obeying capacities correspond to min-cost flow, and for this problem, we only have algorithms with $O(m^{1+o(1)})$ time (see \cite{BPT20,BrandCKLPGSS}) or ones where the runtime depends polynomially on the ratio between maximum and minimum capacities, which in our setting can be $O(n^2)$ (see \cite{GoldbergHKT17}). Thus, we have to come up with a simple, efficient algorithm to do this task.

\begin{lemma2}
There is an efficient algorithm to compute $(1-\eps)$-approximate maximum weight fractional matching in a capacitated bipartite graph. 
\end{lemma2}

Finally, we show how to identify the set of bottleneck edges, $E^*$.

\begin{lemma2}The set of bottleneck edges of E* can be identified by looking at the dual of the matching program of $G_s$. 
\end{lemma2}

\paragraph{Notation} Throughout the paper, we will use $G$ to refer to the current version of the graph, and let $V$ and $E$ be the vertex and edge sets of $G$ respectively. The graphs we deal with are weighted, and we use $R$ to denote the ratio between the max-weight and min-weight edge. Additionally use $\textsf{mwm}(G)$ to denote the weight of the maximum weight matching of $G$. During the course of the algorithm, we will maintain a fractional matching, which corresponds to a non-negative vector $\vec{x}\in [0,1]^m$ satisfying the following constraints : $\sum_{v\ni e}x(e)\leqslant 1$. For a set $S\subseteq E$, we let $x(S)=\sum_{e\in S} x(e)$. Given a capacity function $\kappa $ on the edges of the graph, we say that $\vec{x}$ obeys $\kappa$ if $x(e)\leqslant \kappa(e)$ for all $e\in E$. For a vector $\vec{x}$, we use $\text{supp}(\vec{x})$ to denote the set of edges that are in the support of $\vec{x}$. For a fractional matching $\vec{x}$, we say that it satisfies odd set constraints if for all odd-sized sets $B\subseteq V$, we have, $\sum_{e\in G[B]}x(e)\leqslant \frac{\card{B}-1}{2}$.  We use $\textsf{mwm}(G,\kappa)$ to denote the size of the maximum weight fractional matching obeying the odd set constraints and the capacity function $\kappa$. Additionally, we will use $\gamma_{\eps}=(\nicefrac{1}{\eps})^{O(\nicefrac{1}{\eps})}$. \\ \\
Our main result is a decremental algorithm for maintaining $(1-\eps)$-approximate maximum matching in weighted general graphs. In particular, we restate our main theorem.

\mainthm*

\subsection*{Some Tools from the Literature}

In order to prove our theorem, we will need the following constant weight assumption due to \cite{GP13}. This will enable us to only focus on solving the decremental matching problem on graphs with integral weights in $\set{1,2,\cdots,W}$ as opposed to arbitrary weight graphs. We refer the reader to the Appendix C in \cite{BernsteinDL21} for a proof of their reduction. 

\begin{lemma2}\label{lem:guptapeng}
Let $G$ be a weighted graph with arbitrary weights with $R$ being the ratio between maximum and minimum edge weights. Let $\eps\in (0,0.5)$. If there is an algorithm $\mathcal{A}$ that maintains an $\alpha$-approximate maximum weight matching in a graph whose weights are $\set{1,2,3,\cdots, W}$ with update time $T(n,m,\alpha, W)$, then there is an algorithm $\mathcal{A}'$ that maintains a $(1-\eps)\cdot \alpha$-approximate maximum weight matching in $G$ in update time $O(T(n,m,\alpha, \gamma_{\eps})\cdot \log R)$. 
\end{lemma2}

Additionally, we need the following algorithm, which essentially states that in order to solve the problem on graphs with arbitrary weighted maximum matchings, it is sufficient to solve the problem on graphs which have a matching with a large weight. We state the following reduction which was used by \cite{AKL19} in the stochastic setting for unweighted graphs. Here, we use it for the case of weighted graphs. We postpone the proof to the \Cref{sec:redc}.

\begin{restatable}{lemma2}{simpletomulti}\label{lem:simpletomulti}
Let $\delta\in (0,0.5)$ and $G$ be a weighted graph with weights in $\set{1,2,\cdots, W}$. Let $\textsf{mwm}(G)\geqslant \mu$ and at most $(1+\delta)\cdot \mu$.  There is an algorithm $\textsc{Vertex-Red}(G,\mu, \delta)$ that in time $O(\nicefrac{m\cdot W^3\cdot \log n}{\delta^4})$ returns $\lambda = \frac{100\cdot \log n\cdot W^3}{\delta^4}$ multigraphs $H_1,\cdots, H_{\lambda}$ that have the following properties, where the second property holds with probability at least $1-\nicefrac{1}{n^{2}}$. 
\begin{enumerate}
\item For all $i\in [\lambda]$, $\card{V(H_i)}=\frac{4\cdot (1+\delta)\cdot W\cdot \textsf{mwm}(G)}{\delta}$, $E(H_i)\subseteq E(G)$, and \label{item:parta}
\item Suppose $G$ is subject to deletions, which we simulate on each of the $H_i$'s. More concretely, if an edge $e$ is deleted from $G$, then its corresponding copy in $H_i$ is also deleted. Suppose at the end of the deletion process, $\textsf{mwm}(G)\geqslant \Tilde{\mu}$ for some $\Tilde{\mu}\geqslant (1-2\delta)\cdot \mu$, then $\textsf{mwm}(H_i)\geqslant (1-\delta)\cdot \Tilde{\mu}$ for some $i\in [\lambda]$. \label{item:partb}
\end{enumerate}
\end{restatable}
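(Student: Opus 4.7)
To construct the sketches, I would sample $\lambda$ independent uniformly random hash maps $h_i: V(G) \to [N]$ with $N = \frac{4(1+\delta)W \cdot \textsf{mwm}(G)}{\delta}$, and build $H_i$ as the multigraph on vertex set $[N]$ that contains, for every edge $(u,v) \in E(G)$, a parallel copy of weight $w(u,v)$ between $h_i(u)$ and $h_i(v)$, dropping self-loops. Each edge of $H_i$ keeps a pointer back to its $G$-preimage so that deletions can be propagated in constant time each; the construction takes $O(\lambda(n+m)) = O\!\left(\frac{m W^3 \log n}{\delta^4}\right)$ time, and the first property (on $|V(H_i)|$ and $E(H_i)\subseteq E(G)$) is immediate.

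For the second property, fix an arbitrary final graph $G'$ after the deletions with $\textsf{mwm}(G') \geq \tilde{\mu}$, and let $M^*$ be a maximum weight matching of $G'$ of weight $\tilde{\mu}$. Since weights are at least $1$, $|M^*| \leq \tilde{\mu}$. Call an edge $e=(u,v) \in M^*$ \emph{preserved} in $H_i$ if $h_i(u) \neq h_i(v)$ and neither $h_i(u)$ nor $h_i(v)$ coincides with the image of any other endpoint of $M^*$. A union bound over the $O(|M^*|)$ relevant collision events yields $\Pr[e \text{ not preserved}] \leq 4|M^*|/N \leq \delta/W$. Letting $Z_i$ denote the total weight of $M^*$-edges lost in $H_i$, linearity of expectation gives $\Exp[Z_i] \leq \delta \tilde{\mu}/W$, and Markov's inequality yields $\Pr[Z_i > \delta \tilde{\mu}] \leq 1/W$. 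The preserved edges of $M^*$ form a matching of $H_i$ (after simulating the deletions) of equal weight, so $H_i$ succeeds for $M^*$ with probability at least $1 - 1/W$; independence of the $h_i$'s then gives that the probability that no $H_i$ succeeds for this particular $M^*$ is at most $(1/W)^\lambda$, which is well below $1/n^2$ for the chosen $\lambda$.

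The main obstacle is that, against an adaptive adversary, the matching $M^*$ may depend on the realized hashes, so one cannot simply fix it in advance. I would address this through a union bound over the matchings of $G$ of weight at least $\tilde{\mu}$: any such matching has at most $\tilde{\mu}$ edges, so there are at most $m^{\tilde{\mu}}$ of them, and the generous choice $\lambda = 100 \log n \cdot W^3/\delta^4$ is set so that the $(1/W)^\lambda$ bound absorbs this union bound whenever $\tilde{\mu}$ is polynomially bounded in $W/\delta$. For the complementary regime of larger $\tilde{\mu}$, I plan to sharpen the per-trial failure probability via McDiarmid's inequality applied to the $2|M^*| \leq 2\tilde{\mu}$ independent coordinates $\{h_i(v) : v \in V(M^*)\}$, each of bounded influence $O(W)$ on $Z_i$; this yields a per-trial failure bound of $\exp(-\Omega(\delta^2 \tilde{\mu}/W^2))$, which comfortably dominates the combinatorial growth $n^{O(\tilde{\mu})}$ of candidate matchings and completes the high-probability guarantee.
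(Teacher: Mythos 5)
Your construction is exactly the paper's \textsc{Vertex-Red-Basic}: hash each vertex uniformly into $N = 4(1+\delta)W\cdot\textsf{mwm}(G)/\delta$ bins, independently for each of the $\lambda$ copies, and project edges. Property (a) is immediate. What differs is the analysis of property (b), and there your route diverges from the paper's. The paper partitions the $N$ bins into $\Theta(W|V(M)|/\delta)$ groups, observes that the indicators of a group being \emph{empty} of $M$-endpoints are negatively associated (pure balls-and-bins), applies Chernoff directly to the number of empty groups, and then discards all but one $M$-endpoint per nonempty group. This yields a per-trial failure probability of the form $\exp\bigl(-\Omega(\delta^3|V(M)|/W^3)\bigr)$, which crucially \emph{decays with the matching size} — exactly the shape needed to absorb the $n^{O(|M|)}$ union bound over matchings.

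Your Markov step does not have this shape: it produces a per-trial failure of $1/W$, a constant independent of $\tilde\mu$, so the union bound over $m^{\tilde\mu}$ matchings cannot be absorbed by $\lambda$ independent trials once $\tilde\mu$ is large — as you note — and for $W=1$ the bound degenerates to $1$ and gives nothing at all. You correctly diagnose this and fall back on McDiarmid, and I want to emphasize that once you are there, the Markov regime is simply redundant; McDiarmid covers all $\tilde\mu$, so the two-regime structure adds complexity without gain. The McDiarmid step itself is sound, but the ``bounded influence $O(W)$'' claim deserves an explicit argument, since it is not obvious that re-hashing one endpoint $v$ from bin $a$ to bin $b$ affects only $O(1)$ edges of $M^*$. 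Here is why it does: changing $h_i(v)$ can flip the status of (i) the $M^*$-edge incident to $v$; (ii) at most one other edge via bin $a$ — namely, if there was exactly one other endpoint $x$ at $a$, the collision at $a$ disappears for $x$'s edge, whereas if there were two or more they still collide among themselves and nothing changes; (iii) symmetrically at most one edge via bin $b$. So each coordinate changes $Z_i$ by at most $3W$, giving $\Pr[Z_i > \delta\tilde\mu] \leq \exp\bigl(-\Omega(\delta^2\tilde\mu/W^2)\bigr)$ for $W\geq 2$ (for $W=1$ the gap $a = \delta\tilde\mu - \Exp[Z_i]$ shrinks to $\Theta(\delta^2\tilde\mu)$, degrading the exponent to $\Omega(\delta^4\tilde\mu)$, but $\lambda = 100 W^3\log n/\delta^4$ still absorbs it). With that lemma in place your union bound closes, so the proposal is essentially correct, but note that the paper's empty-group argument sidesteps the Lipschitz bookkeeping entirely by getting negative association and Chernoff ``for free,'' which is why it is the shorter proof.
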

We also need the following result about weighted matchings due to \cite{DP14}.
\begin{lemma2}\label{lem:staticmatch}
There is an $O(\nicefrac{m}{\eps}\cdot \log \nicefrac{1}{\eps})$ time algorithm \textsc{Static-Weighted-Match} that takes as input a graph $G$, and returns a weighted integral matching such that $w(M)\geqslant \textsf{mwm}(G)\cdot (1-\eps)$. 
\end{lemma2}

Finally, since we are settling for an approximate matching, we can say the following:

\begin{observation2}\label{obs:ratiominmaxweight}
We can assume that the ratio between the minimum and the maximum edge weight is at least $\nicefrac{\eps^2}{n^2}$. This is because throwing out all the edges below this threshold will only reduce the weight of the matching by $o(\eps)$. 
\end{observation2}

\begin{assumption}\label{assumption:largematching}
In the rest of the paper we will assume that $\textsf{mwm}(G)=\Omega(\log n)$. This is because the graph is undergoing deletions, and soon as \textsf{mwm}(G) drops below this value, we can switch to a different decremental algorithm which we describe in \Cref{section:smallmatching}.
\end{assumption}

\subsection*{Main Result}

We first state the following lemma, and then assuming this show \Cref{thm:main}. Finally, we state the algorithm for that proves this lemma. 

\begin{lemma2}\label{lem:intermmain}
Let $G$ be a weighted graph with edge weights in $\set{1,2,\cdots, W}$ and let $\eps\in (0,1)$. Then, there is a decremental algorithm with total update time $O(m\cdot 2^{\nicefrac{W}{\eps^2}}\cdot \poly(\log n))$ that maintains an integral matching $M$ with $w(M)\geqslant (1-\eps)\cdot \textsf{mwm}(G)$, with high probability. The algorithm is randomized but works against an adaptive adversary. 
\end{lemma2}

Now, we use \Cref{lem:intermmain} to show \Cref{thm:main}.

\begin{proof}[Proof of \Cref{thm:main}]
By the above lemma, we know that for integer-weighted graphs, with weights in $\set{1,2,\cdots, W}$, there is a decremental $(1-\eps)$-approximate maximum weight matching algorithm with a total update time of $O(m\cdot 2^{\nicefrac{W}{\eps^2}}\cdot \poly(\log n))$. By applying \Cref{lem:guptapeng}, we know there is a decremental $(1-\eps)$-approximate maximum weight matching algorithm for arbitrary weight graphs with maximum edge weight $R$, with an update time of $O(m\cdot 2^{(\nicefrac{1}{\eps})^{O(\nicefrac{1}{\eps})}}\cdot\poly(\log n)\cdot\log R)$. 
\end{proof}

Our next step is the following theorem, which we will use to prove \Cref{lem:intermmain}.

\begin{restatable}{theorem2}{decmatching}\label{thm:main2}
We are given a weighted multigraph $G=(V,E)$ with edge weights in $\set{1,2,\cdots, W}$. Let $\card{V}=n$ and $\card{E}=m$. Suppose $\textsf{mwm}(G)\geqslant \nicefrac{\varepsilon\cdot n}{100\cdot W}$ and $\mu\geqslant \textsf{mwm}(G)\cdot (1-\varepsilon)$. Suppose $G$ is subject to adversarial deletions. There is an algorithm \textsc{Dec-Matching}($G,\mu, \varepsilon$) that processes deletions in $O(m\cdot 2^{\nicefrac{W}{\eps^2}}\cdot \poly(\log n))$ time, and has the following guarantees:
\begin{enumerate}
\item When the algorithm terminates, $\textsf{mwm}(G)< (1-2\varepsilon)\cdot \mu$. Upon termination, the algorithm outputs ``\textsc{no}". \label{item:main2a}
\item Until the algorithm terminates, it maintains an integral matching $M$ with $w(M)\geqslant (1-20\varepsilon)\cdot \mu$. \label{item:main2b}
\end{enumerate}
\end{restatable}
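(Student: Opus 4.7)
The plan is to adapt the congestion-balancing framework of \cite{BPT20,AssadiBD22} to the weighted setting, using the subroutine \textsc{WeightedM-or-E*}() as the main workhorse. The algorithm maintains a capacity function $\kappa : E \to [0,1]$, initialized to a tiny uniform value $\kappa_0 \sim 1/\poly(n)$. At every moment it holds an integral matching $M$ of weight at least $(1-20\varepsilon)\mu$, obtained by rounding a fractional matching $\vec{x}$ on $G$ that obeys $\kappa$ and the odd-set constraints and satisfies $\sum_e w(e)x(e) \geq (1-\varepsilon)\mu$.

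The main loop alternates between two phases. In the \emph{rebalancing} phase, the algorithm repeatedly invokes \textsc{WeightedM-or-E*}$(G,\kappa,\mu)$ on the current graph: if it returns a set $E^*$ of bottleneck edges, we double $\kappa(e)$ for every $e \in E^*$ and repeat; if it returns a fractional matching $\vec{x}$ of the desired weight, we move to the \emph{serving} phase. In the serving phase the algorithm hands $\vec{x}$ to the rounding routine to produce $M$ and then holds $M$ against deletions, losing an edge only when the adversary deletes one, until $w(M)$ drops below $(1-20\varepsilon)\mu$, at which point the rebalancing phase restarts. Termination is detected in the rebalancing phase: if doubling cannot continue without violating the upper bound $\sum_e \kappa(e) w(e) = O(\mu \log n)$, then property (ii) of \textsc{WeightedM-or-E*}() (every matching of weight $\geq (1-\varepsilon)\mu$ places $\varepsilon\mu$ weight inside $E^*$) forces $\textsf{mwm}(G) < (1-2\varepsilon)\mu$, and we output \textsc{no}.

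For the rounding itself, I would group the edges in $\text{supp}(\vec{x})$ by weight class (there are at most $W$ classes, since weights lie in $\{1,\ldots,W\}$), express the restriction of $\vec{x}$ to each class as a convex combination of near-integral matchings, and apply the dynamic matching rounding of \cite{Wajc2020} within each class. Because $\vec{x}$ obeys $\kappa$, the resulting $M$ is spread out across $G$, so with high probability adversarial deletions eat into $w(M)$ only at a rate proportional to their cumulative weight; this is what lets the invariant $w(M) \geq (1-20\varepsilon)\mu$ survive an entire serving phase with only $\Tilde{O}_\varepsilon(1)$ amortized work per deletion.

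The main obstacle is the runtime bookkeeping. Each invocation of \textsc{WeightedM-or-E*}() costs $\Tilde{O}(m)$; the potential $\Phi = \sum_e \kappa(e)w(e)$ starts at $\poly(1/n) \cdot w(E)$ and stays bounded by $O(\mu \log n)$, so a single rebalancing phase triggers only $\poly(\log n)$ doublings. The total number of serving-phase restarts across the entire decremental run is also $\poly(\log n)$, because each restart is charged to the adversary deleting $\Omega(\varepsilon\mu)$ of weight from the current $M$, and $\textsf{mwm}(G)$ is only allowed to drop from $\mu$ down to $(1-2\varepsilon)\mu$ before termination. The $2^{W/\varepsilon^2}$ factor enters through the weighted rounding: the convex-combination decomposition must track the fractional values across all $W$ weight classes simultaneously at granularity $\varepsilon^2/W$, which inflates the atom set by this amount. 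The most delicate step is to push the capacity-respecting property of $\vec{x}$ through this multi-class rounding in such a way that (i) the $(1-20\varepsilon)\mu$ guarantee on $M$ is preserved, and (ii) the per-deletion recourse stays $\Tilde{O}_\varepsilon(1)$, so that when summed over the epoch the stated total bound $O(m\cdot 2^{W/\varepsilon^2}\cdot \poly(\log n))$ is matched.
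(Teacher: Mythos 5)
Your high-level plan — congestion balancing via \textsc{WeightedM-or-E*}(), capacity increases along $E^*$, a potential bounded by $O(\mu\log n)$, and rounding the resulting fractional matching — matches the paper's skeleton, but the rounding step is where your proposal has a genuine gap and diverges from what actually works.

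You propose to group $\text{supp}(\vec{x})$ by weight class and then express each class's restriction of $\vec{x}$ as a convex combination of near-integral matchings, running \cite{Wajc2020} inside each class. In a general (non-bipartite) multigraph, decomposing a fractional matching that obeys odd-set constraints into a convex combination of integral matchings requires Edmonds/Carath\'eodory machinery; doing this dynamically with $\Tilde{O}_\eps(1)$ recourse per deletion is not something you have an argument for, and restricting to a single weight class does not make the blossom structure disappear. The paper sidesteps this entirely: the key property it exploits (part (a) of \Cref{lem:MorE}) is that the $\vec{x}$ returned by \textsc{WeightedM-or-E*}() is structured so that for every pair $u,v$ the collapsed mass $\sum_i x^C_i(u,v)$ is either exactly $1$ or at most $\kappa(D_i(u,v))\cdot\alpha_\eps \leq 1/\alpha_\eps \ll \eps^6$. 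This lets the algorithm split $\vec{x} = \vec{x}^i + \vec{x}^f$ into a genuine integral matching $\vec{x}^i$ (taken verbatim into $M$) and a fractional part $\vec{x}^f$ that already satisfies the small-per-edge hypothesis of \Cref{lem:sparsification}, so only $\vec{x}^f$ needs to be sparsified and statically re-matched. Without something like this structural guarantee, your claim that "the resulting $M$ is spread out across $G$" and survives adversarial deletions at $\Tilde{O}_\eps(1)$ amortized cost is not justified.

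Two smaller but consequential imprecisions. First, you double $\kappa(e)$ on $E^*$, whereas the paper multiplies by $\alpha_\eps = 2^{W/\eps^3}\log n$; the aggressiveness of the capacity bump is load-bearing in the phase-counting argument, since the number of phases is $O(\alpha_\eps^3\log^2 n)$ and each phase is charged via $\Phi_\textsf{del}$ against the bound $w(\kappa(E_0)) = O(\alpha_\eps\mu\eps\log^2 n)$ — not against $O(\mu\log n)$, which is a bound on the min-cost-matching potential $\Pi(G,\kappa)$, a different quantity. Second, the algorithm does not detect termination by observing that capacities cannot be raised further; it simply re-runs \textsc{Static-Weighted-Match} at the start of each phase and terminates when the returned value drops below $(1-3\eps)\mu$, which directly certifies $\textsf{mwm}(G)<(1-2\eps)\mu$. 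Your indirect argument via capacity saturation would need additional work to rule out the case where $E^*$ keeps being returned with small $\kappa$-mass, and in any case the paper's two-counter scheme (one for deletions hitting $\text{supp}(\vec{x})$, one for deletions hitting $M$) is what cleanly separates "start a new phase" from "merely recompute $M$ from the already-built sparsifier."

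Finally, the provenance of the $2^{W/\eps^2}$ factor in your writeup is wrong: it is not the granularity of a convex-combination decomposition, but rather comes in through $\rho_\eps$ and $\alpha_\eps$, which are the sampling rates needed so that the sampled graph $G_s$ faithfully certifies $\textsf{mwm}(G,\kappa)$ via \Cref{lem:phase1} and \Cref{lem:sampling2}.
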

The decremental algorithm mentioned in \Cref{lem:intermmain} uses \textsc{Dec-Matching}() stated in \Cref{thm:main2} as a subroutine. We first show the proof of \Cref{lem:intermmain} assuming \Cref{thm:main2}, and then we show the proof of \Cref{thm:main2}.

\paragraph{Description of Algorithm for \Cref{lem:intermmain}} The algorithm takes as input a graph $G$, and a parameter $\eps>0$. It first estimates the weight of the maximum matching of $G$ by running \textsc{Static-Weighted-Match}($G,\eps$) given in \Cref{lem:staticmatch}. Let this estimate be $\mu$. The algorithm then instantiates a procedure to create multiple parallel instances of 
the algorithm \textsc{Dec-Matching}(). As mentioned in \Cref{thm:main2}, \textsc{Dec-Matching}() can maintain a $(1-\eps)$-approximate maximum weight matching of a multigraph, provided the weight of the matching in the multigraph is $\Omega(\nicefrac{n}{W})$. So, we use the reduction mentioned in \Cref{lem:simpletomulti} to create $\lambda$ multigraphs $H_1,\cdots, H_{\lambda}$ which have $O(\nicefrac{W\cdot \mu}{\eps})$ vertices as opposed to $n$ vertices, and we run parallel instances of $\textsc{Dec-Matching}(H_i,\mu\cdot(1-\eps),\eps)$ for all $i\in [\lambda]$. This procedure also instantiates a pointer called \textsc{cur}, which points to the least indexed $H_i$ that contains a matching of weight at least $(1-\eps)\cdot \mu $. The algorithm will output the matching indexed by \textsc{cur} (denoted $M_{\textsc{cur}}$). The algorithm then moves onto a procedure for deletion of edges. If an edge $e\in G$ is deleted, then it is deleted from each of the $H_i$'s labelled \textsc{active}. If at this point, the weight of the maximum matching of $H_{\textsc{cur}}$ drops by a significant amount, then we increment the value of $\textsc{cur}$. Finally, if all $\textsf{mwm}(H_i)$ drops by a significant amount for all $i\in [\lambda]$, then we can conclude that $\textsf{mwm}(G)$ has also dropped by a significant amount (this is due to contrapositive of \Cref{lem:simpletomulti}\ref{item:partb}). 

\begin{algorithm}
	\algorithmicrequire{ Graph $G$ and a parameter $\eps>0$}\\
	\algorithmicensure{ A matching $M$ with $w(M)\geqslant (1-8\eps)\cdot \textsf{mwm}(G)$}
	\caption{}
	\begin{algorithmic}[1]
		\State $M\leftarrow\textsc{Static-Weighted-Match}(G,\eps)$. \label{item:line1}
		\State $\mu\leftarrow w(M)$
        \State $\lambda \leftarrow \frac{W^3\cdot \log n}{\eps^4\cdot (1-\eps)}$
		\Procedure{Instantiating \textsc{Dec-Matching}()}{}
			\State $i\leftarrow 1$.
			\State \textsc{cur}$\leftarrow \lambda +1$.
			\State Let $H_1,H_2,\cdots ,H_{\lambda}$ be the multigraphs returned by $\textsc{Vertex-Red}(G,\mu,\eps)$. 
			\State Label $H_1,\cdots, H_{\lambda}$ as \textsc{active}.
			\For{$i\leqslant \lambda$}
			\If{$w(\textsc{Static-Weighted-Match}(H_i,\eps))<(1-\eps)\cdot \mu$}
			\State Label $H_i$ as \textsc{inactive}. 
			\EndIf
			\EndFor
			\For{$H_i$ labelled \textsc{active}}
			\State Initialize $\textsc{Dec-Matching}(H_i,\mu\cdot (1-\eps),\eps)$ (denoted $\mathcal{A}_i$).
			\State Let $M_i$ be the matching maintained by $\mathcal{A}_i$. 
			\EndFor
			\State Let \textsc{cur} be the least index $i$ such that $H_i$ is \textsc{active}. 
			\If{\textsc{cur} $>\lambda$} 
			\State \textsc{terminate}.
			\EndIf
		\EndProcedure
		\Procedure{Handling deletion of edge $e$}{}
		\For{$H_i$ labelled \textsc{active}}
		\State Feed the deletion of $e$ to $\mathcal{A}_i$. \label{alg1:dec-matching}
		\If{$\mathcal{A}_i$ returns \textsc{no}}
		\State Label $H_i$ as \textsc{inactive}.
		\EndIf
		\EndFor
		\If{$H_{\textsc{cur}}$ labelled \textsc{active}}\label{line:curactive} \Comment{Label may be changed in \Cref{alg1:dec-matching}}.
		\State Continue to output $M_{\textsc{cur}}$. 
		\Else
		\If{$\textsc{cur}>\lambda$}
			\label{line:restart}	\State Return to \Cref{item:line1}. \Comment{Start over with new estimate for $\mu$}.
		\Else
		\State $\textsc{cur}\leftarrow \textsc{cur}+1$
		\State Goto \Cref{line:curactive}.
		\EndIf
		\EndIf
		\EndProcedure	
	\end{algorithmic}
 \caption{Algorithm for \Cref{lem:intermmain}}
	\label{alg:proofthm1}
\end{algorithm}

\begin{proof}[Proof of \Cref{lem:intermmain}]
We want to show how \Cref{alg:proofthm1} implies \Cref{lem:intermmain}. We first start with the proof of correctness. We want to claim that \Cref{alg:proofthm1} indeed outputs a matching of size at least $(1-10\eps)\cdot \textsf{mwm}(G)$. To see this, observe that if we run \Cref{alg:proofthm1} with input $G,\eps$ then, we want to claim that it returns a matching $M$ with $w(M)\geqslant (1-10\eps)\cdot \mu$, where $\mu\geqslant (1-\eps)\cdot \textsf{mwm}(G)$. Consider the multigraphs $H_1,\cdots,H_{\lambda}$ output by \textsc{Vertex-Red}($G,\eps$). By \Cref{lem:simpletomulti}, we know that these satisfy the requirements of \Cref{thm:main2}. Thus, every run of \textsc{Dec-Matching}($H_i,\mu,\eps$), until it returns \textsc{no}, it returns a matching of weight at least $(1-\eps)\cdot \mu\geqslant (1-10\eps)\cdot \textsf{mwm}(G)$.  This takes care of the correctness, and now we want to upper bound the runtime of the algorithm.

\begin{claim2}
Each time \Cref{alg:proofthm1} executes Line \ref{line:restart}, $\textsf{mwm}(G)$ has dropped by a factor of $(1-2\eps)$.
\end{claim2}
\begin{proof}
To see this, first observe that each time the algorithm returns to \Cref{item:line1}, it must be the case that \textsc{Dec-Matching}($H_i,\mu\cdot (1-\eps),\eps$) has returned \textsc{no} for all $i\in [\lambda]$. Thus, by \Cref{thm:main2}, we can conclude that $\textsf{mwm}(H_i)<(1-2\eps)\cdot (1-\eps)\mu$ for all $i\in [\lambda]$. This in turn implies by contrapositive of \Cref{lem:simpletomulti}\ref{item:partb} that $\textsf{mwm}(G)<(1-2\eps)\cdot \mu$ as well. Thus, $\textsf{mwm}(G)$ has dropped from at least $\mu$ to at most $(1-2\eps)\cdot \mu$. This concludes the proof of this claim. 
\end{proof}

We upper bound the runtime of the algorithm as follows: We instantiate $\lambda=O\paren{\frac{W^3\cdot \log n}{\eps^4\cdot(1-\eps)}}$ instances of \textsc{Dec-Matching}(). We start by bounding the runtime of the procedure that instantiates these $\lambda$ instances. This has its runtime dominated by \textsc{Vertex-Red}($G,\mu,\eps$), which takes $O(m)$ time. The second source of runtime is running the algorithm \textsc{Static-Weighted-Match}($H_i,\eps$), for $i\in [\lambda]$. \\ \\
Next, we upper bound the runtime of the procedure that handles deletions. Note that if an edge $e$ is deleted then, it is deleted from each of the $\lambda$ copies. Thus, the total time to process deletions over $m$ deletions is at most $O(m\cdot \lambda)$. Additionally, for maintaining each of the $\mathcal{A}_i$ for $i\in [\lambda]$, we take $O(m\cdot 2^{\nicefrac{W}{\eps^2}}\cdot \poly(\log n)\cdot \lambda)$. Finally, by the above claim, we can conclude that \Cref{line:restart} is executed $O(\log n)$ times, so the total runtime of \Cref{alg:proofthm1} is $O(m\cdot 2^{\nicefrac{W}{\eps^2}}\cdot \poly(\log n))$.
\end{proof}

\section{Algorithm for \Cref{thm:main2}}
Before we give the algorithm for \Cref{thm:main2}, we give some definitions. Recall, that is sufficient to design \textsc{Dec-Matching}() for the case of graph that have integer weights in $\set{1,2,\cdots,W}$ by \Cref{lem:guptapeng}. 

\begin{definition2}
For a pair of vertices $u,v$, we define $D_i(u,v)$ to be the edges $e$ between $u$ and $v$ that have weight $i$. Additionally, we also have $D(u,v)=\cup_{i=1}^W D_{i}(u,v)$. Since we assume integral weights, these sets are well-defined. If $e$ is an edge between $u,v$, then $D_i(e)\coloneqq D_i(u,v)$ and $D(e)\coloneqq D(u,v)$.
\end{definition2}

\begin{definition2}
Let $G$ be a weighted multigraph, with $n$ vertices and $m$ edges. Let $\kappa$ be the capacity function on the edges of the graph, and let $\vec{x}$ be a fractional matching. We define $\vec{x}^C$ to be a vector, with support size $\min\set{{n\choose 2}, m}$, where for a pair of vertices $u,v$, $x^C_{i}(u,v)=\sum_{e\in D_i(u,v)}x(e)$. That is, $\vec{x}^C$ is obtained by ``collapsing" all edges of the same weight between a pair of vertices together. Similarly, suppose $\vec{y}$ is a vector with $\card{\text{supp}(\vec{y})}\leqslant {n\choose 2}\cdot W$, where, $y_i(u,v)$ is the entry corresponding to the edge of weight $i$ between $u$ and $v$. Then, we define $\vec{y}^D$ to be an $m$ length vector such that for every $e\in E$ between $u,v$ with $w(e)=i$, $y^D(e)\coloneqq \frac{y_i(u,v)\cdot \kappa(e)}{\kappa(D_{i}(u,v))}$. Intuitively, $
\vec{y}^D$ is a multigraph obtained by distributing the $y_i(u,v)$ among all edges of the same weight.
\end{definition2}

\begin{definition2}
Throughout the algorithm, we use $\alpha_{\eps}\coloneqq 2^{\nicefrac{W}{\eps^3}}\cdot \log n$ and similarly, we have, $\rho_{\eps}\coloneqq 2^{\nicefrac{W}{\eps^2}}\cdot \log n$.
\end{definition2}

We now state the main lemma, which will be used to show \Cref{thm:main2}.

\begin{restatable}{lemma2}{MorE}\label{lem:MorE}
Let $G$ be a multigraph with edge weights in $\set{1,2,\cdots, W}$. Then, there is an algorithm \textsc{WeightedM-or-E*}() that takes as input $G,\kappa, \eps\in (0,0.5)$, and a parameter $\mu\geqslant \textsf{mwm}(G)\cdot (1-\eps)$, and in time $O(\nicefrac{m\cdot W\cdot \log n}{\eps})$ returns one of the following.
\begin{enumerate}
\item A fractional matching $\vec{x}$ such that $\sum_{e\in E}w(e)\cdot x(e)\geqslant (1-2\eps)\cdot \textsf{mwm}(G)$, with the following properties. \label{item:MorEa}
\begin{enumerate}[label=(\roman*)]
\item Let $e\in D_{i}(u,v)$ such that $e\in \text{supp}(\vec{x})$, with $\kappa(D_i(u,v))\geqslant \nicefrac{1}{\alpha_{\eps}^2}$, $x(D_i(u,v))=1$, and we have, $x(e)=\frac{\kappa(e)}{\kappa(D_i(u,v))}$. Moreover, since $\vec{x}$ is a fractional matching, we have $x(D_j(u,v))=0$ for all $j\neq i$. \label{item:MorEai}
\item Consider any $e\in D_{i}(u,v)$ such that $e\in \text{supp}(\vec{x})$, with $\kappa(D_i(u,v))\leqslant \nicefrac{1}{\alpha_{\eps}^2}$, then $x(D_i(u,v))\leqslant \kappa(D_i(u,v))\cdot \alpha_{\eps}$, and $x(e)\leqslant \kappa(e)\cdot \alpha_{\eps}$. Moreover, for any $e\in D_j(u,v)$ with $\kappa(D_j(u,v))>\nicefrac{1}{\alpha_{\eps}^2}$, we have $x(e)=0$.\label{item:MorEaii}
\end{enumerate}
\item A set of edges $E^*$ such that $\sum_{e\in E^*}w(e)\cdot \kappa(e)=O(\textsf{mwm}(G)\cdot \log n)$, and for any integral matching $M$ with $w(M)\geqslant (1-2\eps)\cdot \textsf{mwm}(G)$, we have $w(M\cap E^*)\geqslant \eps\cdot \textsf{mwm}(G)$. Additionally, for all $e\in E^*$, we have $\kappa(e)<1$. \label{item:MorEb}
\end{enumerate}
\end{restatable}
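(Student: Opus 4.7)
The plan is to follow the template of the unweighted $\textsc{M-or-E*}()$ subroutine, with each of its three pillars replaced by the weighted analogue sketched informally earlier in the paper. First, I would collapse parallel edges within each weight class so that every bundle $D_i(u,v)$ is represented by a single edge of capacity $\kappa(D_i(u,v))$ and weight $i$, and then sample each collapsed edge independently with probability $\min\{1,\alpha_\eps\cdot \kappa(D_i(u,v))\}$ to form a weighted (but uncapacitated) multigraph $G_s$. The weighted sampling statement (the analogue of \Cref{lem:detect} given in the introduction) gives $\textsf{mwm}(G_s)\geq \textsf{mwm}(G,\kappa)-\eps n\geq (1-O(\eps))\cdot \textsf{mwm}(G)$, where the last inequality uses the lower bound $\mu=\Omega(\eps n/W)$ from the hypothesis of \Cref{thm:main2}. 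Running \textsc{Static-Weighted-Match} from \Cref{lem:staticmatch} on $G_s$ then produces, in time $O(m\log(1/\eps)/\eps)$, an integral matching $M$ with $w(M)\geq (1-\eps)\textsf{mwm}(G_s)$ together with a near-feasible dual certificate $\vec{y}^{\,*}$.

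If $w(M)\geq (1-2\eps)\mu$, we are in the \emph{fractional-matching} case of \ref{item:MorEa}. I would split $M=M_H\cup M_L$ depending on whether $\kappa(D_i(u,v))\geq 1/\alpha_\eps^2$ or not for the bundle containing each matched edge. For $e\in M_H$ lying in some $D_i(u,v)$, set $x^C_i(u,v)=1$ in the collapsed space and pass back to the multigraph via $\vec{y}^D$, giving $x(e)=\kappa(e)/\kappa(D_i(u,v))$ and guaranteeing property \ref{item:MorEai}. For the low-capacity vertices $V_L = V\setminus V(M_H)$, I would invoke the weighted bipartite fractional matching subroutine (the lemma explicitly promised in the introduction) on $G[V_L]$ restricted to bundles of collapsed capacity below $1/\alpha_\eps^2$, using effective capacities $\kappa\cdot \alpha_\eps$; this is precisely the regime in which the resulting $\vec{x}^D$ satisfies property \ref{item:MorEaii}. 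Combining the two parts using the weighted version of \Cref{lem:find} shows the total weight is at least $\textsf{mwm}(G_s)-\eps n\geq (1-2\eps)\textsf{mwm}(G)$.

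Otherwise $w(M)<(1-2\eps)\mu$, and we are in the \emph{bottleneck} case. Following the strategy of \Cref{lem:increase}, I would define $E^*$ as the set of edges $e\in E$ with $\kappa(e)<1$ that are \textbf{not covered} by $\vec{y}^{\,*}$, meaning $w(e)$ exceeds the sum of endpoint duals plus the appropriate odd-set contributions. Near-feasibility of $\vec{y}^{\,*}$ on $E\setminus E^*$ together with weak LP duality will yield $\sum_{e\in E^*} w(e)\kappa(e)=O(\textsf{mwm}(G)\log n)$. Conversely, if some integral matching $M'$ with $w(M')\geq (1-2\eps)\textsf{mwm}(G)$ had $w(M'\cap E^*)<\eps\cdot\textsf{mwm}(G)$, then $M'\setminus E^*$ would certify, via $\vec{y}^{\,*}$, a $\kappa$-feasible fractional matching of weight more than $(1-3\eps)\textsf{mwm}(G)$, contradicting the failure of the matching case.

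The principal obstacle is executing the primal-dual identification of $E^*$ for weighted general graphs, since Tutte--Berge (which drives the unweighted proof of \Cref{lem:increase}) has no weighted counterpart. The planned workaround is to use the explicit polynomial-support near-feasible duals produced by the algorithm of \cite{DP14}, which give a concrete handle on ``covered'' versus ``uncovered'' edges, and to leverage the $\alpha_\eps$-threshold: a bundle with collapsed capacity below $1/\alpha_\eps^2$ can contribute at most $O(1/\alpha_\eps)$ weight to any $\kappa$-feasible fractional matching, so routing it entirely into $E^*$ costs only an $O(\log n)$ factor in the total $w\cdot\kappa$ budget. Calibrating $\alpha_\eps$ so that property \ref{item:MorEaii} and the $E^*$-budget are simultaneously achieved, while maintaining the near-linear runtime $O(mW\log n/\eps)$, is the delicate quantitative step.
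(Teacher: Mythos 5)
Your high-level skeleton matches the paper's \textsc{WeightedM-or-E*}() (sample, run \textsc{Static-Weighted-Match}, branch on matching weight, split into $M_I$/low-capacity halves, define $E^*$ from uncovered edges), but several of the specific mechanisms you propose differ from the paper's and at least one is genuinely broken.

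First, the sampling probability. You propose $p\propto \alpha_\eps\cdot\kappa$; the paper samples with $p(e)=\min\{1,\rho_\eps\cdot\kappa(e)\}$, where $\rho_\eps = 2^{W/\eps^2}\log n$ is \emph{much smaller} than $\alpha_\eps = 2^{W/\eps^3}\log n$. This separation is not cosmetic. The paper's \Cref{lem:sampling2} needs to argue that the sampled graph $G_s[X]$ cannot have a heavier matching than $G[X]\cap E_L$ with inflated capacities $\kappa^+=\alpha_\eps\cdot\kappa$, and the argument bounds $\expect{\sum_u\Delta y_u}\leqslant \sum_e z(e)\kappa(e)\rho_\eps \leqslant \sum_e z(e)\kappa^+(e)\cdot 2^{-W/\eps^2}$, using precisely the gap $\rho_\eps/\alpha_\eps\leqslant 2^{-W/\eps^2}$. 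With $\alpha_\eps$ as the sampling rate this ratio is $1$, you lose the exponential decay, and the Chernoff bound in the proof gives nothing. So the ``calibrate $\alpha_\eps$'' step you flag as delicate actually requires \emph{two} parameters and a multiplicative gap between them; a single scale cannot do both jobs.

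Second, and more seriously, your bound on $\sum_{e\in E^*}w(e)\kappa(e)$ does not follow from ``near-feasibility of $\vec{y}^{\,*}$ plus weak LP duality.'' Weak duality bounds the weight of matchings inside the \emph{covered} set $E\setminus E^*$; it says nothing about the capacitated weight of the \emph{uncovered} set $E^*$, which could in principle consist of a huge mass of cheap high-capacity edges. The paper's actual argument (\Cref{lem:setE*}) is probabilistic: fix any candidate dual vector $(\vec{y},\vec{r})$ from the $O(\eps)$-granular laminar family, observe that every uncovered edge necessarily failed to be sampled into $G_s$ (else the returned dual could not have missed it), and show that if $w(\kappa(E_{y,r}))\geqslant \textsf{mwm}(G)\log n$ then the probability all of $E_{y,r}$ is missed is $\exp(-\Omega(\textsf{mwm}(G)\rho_\eps/W))$; a union bound over the $2^{O(\eps^{-3}\textsf{mwm}(G)\log n/W)}$ possible duals closes it. Nothing resembling this union-bound over dual certificates appears in your plan, and without it the $O(\textsf{mwm}(G)\log n)$ budget on $E^*$ is unjustified.

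Third, a smaller gap: you propose calling the weighted \emph{bipartite} fractional matching routine directly on $G[V_L]$, but $G[V_L]$ is a subgraph of a general graph and need not be bipartite. The paper handles this by running the bipartite routine on the bipartite double cover $\textsc{bc}(G[V_L^M]\cap E_L)$ and averaging the two lifted values (\Cref{lem:Hungarian}); the small-flow condition $\kappa(D_i(u,v))\leqslant 1/\alpha_\eps^2$ is what lets the averaged solution satisfy odd-set constraints via \Cref{obs:litteflow}. Your write-up should make this reduction explicit, since otherwise the odd-set feasibility of the fractional part of $\vec{x}$ is unproven.
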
 

Consider the situation in which the algorithm $\textsc{WeightedM-or-E*}()$ returns a fractional matching $\vec{x}$. For any pair of vertices $u,v$ consider $\sum_{i=1}^W x_{i}^C(u,v)$, then this is either $1$ or at most $\eps$ by \Cref{lem:MorE}\ref{item:MorEa}. Thus, it satisfies odd set constraints for all odd sets of size at most $\nicefrac{1}{\eps}$, by a folklore lemma (\Cref{obs:litteflow}), we can then argue that $\vec{x}$ contains an integral matching of weight at least $(1+\eps)^{-1}\cdot \sum_{u\neq v}\sum_{i=1}^W x^{C}_{i}(u,v)\cdot i$ in its support. 

\begin{property}
We will use \textsc{WeightedM-or-E*}() as a subroutine in \textsc{Dec-Matching}(), and we will get a matching $\vec{x}$ with the following properties. We mention these properties since they will help us visualize the fractional matching better. 
\begin{enumerate}
\item Each time \textsc{WeightedM-or-E*}() returns $E^*$, we will increase capacity along $E^*$ by multiplying existing capacities by the same factor.
\item Consider $u,v\in V$, and let $e,e'$ be two edges between $u$ and $v$ such that $w(e)=w(e')$, then $\kappa(e)=\kappa(e')$ at all times during the run of the algorithm. 
\end{enumerate}
\end{property}
 The next property follows directly as a consequence of \Cref{lem:MorE}\ref{item:MorEa}.
 
 \begin{property}\label{prop:excessflow}
Let $\vec{x}$ be a matching output by \Cref{lem:MorE}, then $x(e)\leqslant \kappa(e)\cdot \alpha_{\eps}^2$ for all $e\in E$. This is evident from \Cref{lem:MorE}\ref{item:MorEa}.
 \end{property}
 
 \begin{definition2}
 Let $G$ be a multigraph, and let $\vec{x}$ be a fractional matching of $G$. Then, we split supp($\vec{x}$) into two parts: $\vec{x}^i$ and $\vec{x}^f$, where, $\vec{x}=\vec{x}^i+\vec{x}^f$, and $\text{supp}(\vec{x}^i)=\set{e\mid x(D_j(e))>\nicefrac{1}{\alpha_{\eps}^2}\text{ and }w(e)=j}$ and $\text{supp}(\vec{x}^f)=\set{e\mid x(D_j(e))\leqslant \nicefrac{1}{\alpha_{\eps}^2}\text{ and }w(e)=j}$. When $\vec{x}$ is the output of \textsc{WeightedM-or-E*}(), then these correspond to the integral and fractional parts of $\vec{x}$. 
 \end{definition2}
 
 \begin{property}\label{prop:vertexdisjoint}
 Let $G$ be any multigraph, and let $\vec{x}$ be a fractional matching of $G$ that is returned by \textsc{WeightedM-or-E*}(). Then, for any pair of vertices, $u,v$, we have the following properties, which are a consequence of \Cref{lem:MorE}\ref{item:MorEa}\ref{item:MorEai} and \Cref{lem:MorE}\ref{item:MorEa}\ref{item:MorEaii}:
 \begin{enumerate}
 \item Either $D(u,v)\cap \text{supp}(\vec{x}^i)\neq \emptyset$ or $D(u,v)\cap \text{supp}(\vec{x}^f)\neq \emptyset$, but not both. 
 \item For any $u,v,j$ such that $D_j(u,v)\cap \text{supp}(\vec{x})\neq \emptyset$, we have $D_j(u,v)\subseteq \text{supp}(\vec{x})$.
 \end{enumerate}
 Thus, the supports of $\vec{x}^i$ and $\vec{x}^f$ are vertex disjoint.
  \end{property}
  \begin{property}\label{prop:integpart}
  Let $\vec{x}$ be a fractional matching returned by \textsc{WeightedM-or-E*}(), consider $\vec{z}=\vec{x}^i$. Then, $\vec{z}^C$ is an integral matching. This is implied by \Cref{lem:MorE}\ref{item:MorEa}. 
  \end{property}
   The next ingredient is a method to round weighted fractional matchings. This is implicit in \cite{Wajc2020} (for unweighted, bipartite graphs), however in the \Cref{sec:rounding}, we extend show this for the case of general, weighted graphs for the specific case of the fractional matching used by our algorithm. 
   
   \begin{restatable}{lemma2}{lemuno}\label{lem:sparsification}(\cite{Wajc2020})
   Suppose $G$ is a integer-weighted multigraph, let $\eps\in (0,\nicefrac{1}{2})$, and let $\vec{x}$ be a fractional matching such that $x(D(e))\leqslant \eps^6$ for all $e\in E$. Then, there is a dynamic algorithm \textsc{Sparsification}($\vec{x},\eps$), that has the following properties.
   \begin{enumerate}
   \item \label{lem:sparseb} The algorithm maintains a subgraph $H\subseteq \text{supp}(\vec{x})$, such that $\card{E(H)}=O_{\eps}(\textsf{mwm}(\text{supp}(\vec{x}))\log n)$ and with high probability, $\textsf{mwm}(H)\geqslant (1-\eps)\cdot \sum_{e\in E}w(e)\cdot x(e)$.
   \item \label{lem:sparsea} The algorithm handles the following updates to $\vec{x}$: the adversary can either remove an edge from supp($\vec{x}$) or for any edge $e$, the adversary can reduce $x(e)$ to some new value $x(e)\geqslant 0$. 
   \item The algorithm handles the above-mentioned updates in $\Tilde{O}_{\eps}(1)$ worst case time. 
   \end{enumerate}
   \end{restatable}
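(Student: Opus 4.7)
The plan is to reduce the weighted, general-graph statement to the unweighted (bipartite-style) sparsifier of Wajc by bucketing by weight. First I would split the support of $\vec{x}$ into the $W$ weight classes: for each $i\in\{1,\dots,W\}$ let $\vec{x}^{(i)}$ be the restriction of $\vec{x}$ to edges of weight $i$. Each $\vec{x}^{(i)}$ is still a fractional matching, and it still satisfies the smallness hypothesis $x^{(i)}(D_i(e))\leqslant x(D(e))\leqslant\eps^6$. The dynamic sparsifier is then $H=\bigcup_{i=1}^W H_i$, where each $H_i$ is Wajc's sparsifier applied to the unweighted fractional matching $\vec{x}^{(i)}$: independently sample each $e\in\text{supp}(\vec{x}^{(i)})$ with probability $p(e)=\min\{1,\Theta_{\eps}(\log n)\cdot x(e)\}$.

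Next I would bound the size and matching value. The expected number of edges sampled into $H_i$ is $\Theta_{\eps}(\log n)\cdot\sum_{e:w(e)=i}x(e)$, and summing over $i$ and using $\sum_{e}w(e)x(e)\leqslant\textsf{mwm}(\text{supp}(\vec{x}))$ together with Chernoff gives $|E(H)|=O_{\eps}(\textsf{mwm}(\text{supp}(\vec{x}))\log n)$ with high probability. For matching preservation, the smallness hypothesis $x(D(e))\leqslant\eps^6$ together with the folklore rounding argument already used in the paper yields, for each weight class $i$, an integral matching of weight $(1-\eps)\cdot\sum_{e:w(e)=i}i\cdot x(e)$ inside $\text{supp}(\vec{x}^{(i)})$, since every odd-set constraint is slack by a factor much larger than $\eps$. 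Applying Wajc's matching-concentration argument per weight class then shows that $H_i$ contains an integral matching of nearly this weight with high probability; a union bound over the $W$ classes and summing the weighted contributions yields the claimed $\textsf{mwm}(H)\geqslant(1-\eps)\sum_{e}w(e)x(e)$.

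For the dynamic operations, deletions from $\text{supp}(\vec{x})$ are handled by simply removing the edge from $H$, while reductions to $x(e)$ are handled lazily: since sampling probabilities only decrease, the current sample remains a valid independent sample of the updated distribution. To prevent staleness we rebuild each $H_i$ after $\Theta_{\eps}(\textsf{mwm}(\text{supp}(\vec{x})))$ cumulative updates, amortizing the $O(m)$ rebuild cost to $\Tilde{O}_{\eps}(1)$ per update and deamortizing via Wajc's standard reservoir-sampling schedule. The principal obstacle is the matching-preservation analysis in the general-graph setting: unlike the bipartite case, blossom constraints can in principle introduce a large integrality gap, but the hypothesis $x(D(e))\leqslant\eps^6$ is precisely what is needed to make every odd-set constraint non-binding, which reduces each per-weight-class concentration step to Wajc's bipartite argument.
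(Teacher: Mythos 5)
Your proposal takes a genuinely different route from the paper, and it has two concrete gaps.

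First, the decomposition axis is wrong for the mechanism you describe, and more importantly the recombination step fails. You bucket the support by \emph{edge weight} and then claim that producing a large integral matching $M_i$ in each $H_i$ and ``summing the weighted contributions'' gives a large matching in $H=\bigcup_i H_i$. But the $M_i$ are not vertex-disjoint across weight classes, so $\textsf{mwm}(H)$ is \emph{not} bounded below by $\sum_i w(M_i)$. The paper avoids this entirely: it buckets edges by the \emph{fractional value} $x(e)$ (into classes $E_i=\{e: x(e)\in[(1+\eps)^{-i},(1+\eps)^{-i+1})\}$), does a proper edge coloring of each $E_i$, and samples $O_\eps(\log n)$ color classes (matchings). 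The key to matching preservation (\Cref{thm:largeintegral}) is then to exhibit a \emph{single} random fractional matching $\vec{y}$ supported on the sampled graph $K$, with $y(e)\leqslant\eps$ for all $e$ and $\Exp[\sum_e w(e)y(e)]\geqslant(1-O(\eps))\sum_e w(e)x(e)$, using negative association among edges sharing a vertex (which the color-class sampling supplies). The weight function then only enters linearly in the objective, so no weight bucketing is needed: the sparsification argument is essentially weight-agnostic, and the weighted statement falls out of the same computation. Your per-weight-class argument would need to be replaced by a single global fractional-matching construction over $H$, at which point the weight bucketing buys you nothing.

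Second, the dynamics do not meet the stated bound. You handle $x(e)$ decreases lazily on the grounds that ``sampling probabilities only decrease, the current sample remains a valid independent sample''; this is false — if $p(e)$ drops and $e$ was already sampled, the old sample now over-represents $e$, and you would need to re-flip (i.e., delete $e$ from $H$ with appropriate probability). Your fix (periodic $O(m)$ rebuild, amortized over $\Theta_\eps(\textsf{mwm})$ updates, then ``deamortizing'') only gives amortized bounds and is hand-waved, while the lemma requires \emph{worst-case} $\Tilde{O}_\eps(1)$ update time. The paper achieves worst case cleanly through the structure of the construction: a dynamic proper edge coloring algorithm (\Cref{lem:edgecolour}, $O(\log n)$ worst case) maintains the $\Phi_i$; deleting $e$ from $\text{supp}(\vec{x})$ is one deletion in the coloring of its class; reducing $x(e)$ moves $e$ from $E_i$ to $E_j$ for some $j<i$, i.e., one deletion plus one insertion in two colorings. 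No rebuild is ever needed.
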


We now show how \textsc{WeightedM-or-E*}() and \textsc{Sparsification}() give us an algorithm for \Cref{thm:main2}. This algorithm is called \textsc{Dec-Matching}(). 

\paragraph{Description of \textsc{Dec-Matching}()}
The algorithm takes as input a graph $H$, and instantiates its capacities. It then gets an estimate on $\textsf{mwm}(H)$, denoted $\mu$. The algorithm then runs the subroutine \textsc{WeightedM-or-E*}($H,\mu, \eps,\kappa$), and this subroutine either outputs a fractional matching $\vec{x}$ with $\sum_{e\in E}w(e)\cdot x(e)\geqslant (1-\eps)\cdot \textsf{mwm}(G)$, or if the matching is not heavy enough, a set of edges $E^*$ (see \Cref{lem:MorE}). In the latter case, the algorithm proceeds by increasing the capacities along $E^*$ (see \Cref{line:capchange}). On the other hand, if the fractional matching is large enough, then by \Cref{lem:MorE}\ref{item:MorEa}, it can be broken up into two parts where $x^C((u,v))=1$ or $x^C((u,v))\leqslant \eps^6$ (see \Cref{line:yz}). The algorithm \textsc{Sparsification}() extracts the fractional half of the matching (\Cref{line:sparsification}). Note that this part satisfies the premise of \Cref{lem:sparsification}, and therefore, the output is a sparse graph $S$, where $\textsf{mwm}(S)\geqslant (1-\eps)\sum_{e\in E} w(e)\cdot x(e)$. Thus, we then run \textsc{Static-Weighted-Match}($S,\eps$) and combine this with the integral part of $\vec{x}$ to output $M$ (see \Cref{line:rounding}). \\ \\
Subsequently, the algorithm then processes deletions. It maintains two counters, namely, \textsc{CounterM} and \textsc{CounterX}. As edges are deleted from the graph, at some point if \textsc{CounterX} increases to $\eps\cdot \mu$, then we know that the adversary has reduced the weight of the fractional matching by a significant amount. At this point: either $\textsf{mwm}(G)$ has dropped by a significant amount, or it hasn't and we must compute a fresh robust fractional matching, and therefore we begin a new phase. On the other hand, if \textsc{CounterX} has not increased, but \textsc{CounterM} has increased to $\eps\cdot \mu$, then we know that $\vec{x}$ contains a high weight integral matching its support, and we recompute it.
\begin{algorithm}[H]
	\caption{\textsc{Dec-Matching}($H,\mu,\eps$)}
	\label{alg:decmatching}
	\begin{algorithmic}[1]
		\State \label{line:initcap}For all $e\in E(H)$, $\kappa(e)\leftarrow \nicefrac{1}{\alpha_{\eps}^{\lceil \log_{\alpha_{\eps}}n \rceil}}$
		\Procedure{Starting a new phase}{} 
		\State $\mu'\leftarrow$ \textsc{Static-Weighted-Match}($H,\eps$)\label{line:recomp}\Comment{$\mu'\geqslant (1-\eps)\cdot \textsf{mwm}(H)$}
		\If{$\mu'\leqslant (1-3\eps)\cdot \mu$} 
		\State Return \textsc{no}, and terminate.\label{line:funcinfty}
		\Else 
		\While{\textsc{WeightedM-or-E*}($H,\kappa,\eps,\mu'$) returns $E^*$}
		\State \label{line:capchange} For all $e\in E^*$, $\kappa(e)\leftarrow \kappa(e)\cdot \alpha_{\eps}$
		\EndWhile
		\EndIf
		\EndProcedure
		\State Let $\vec{x}$ be the matching returned by \textsc{WeightedM-or-E*}(). \Comment{This is a matching in a multigraph.}
		\State $\vec{y}\leftarrow \vec{x}^f$, $\vec{z}\leftarrow \vec{x}^i$ \label{line:recomp2}\Comment{We will update $\vec{y},\vec{z}$ as edges are deleted.}\label{line:yz}
		\State $S\leftarrow \textsc{Sparsification}(\vec{y}^C,\eps)$ \Comment{Recall that \textsc{Sparsification}() is dynamic} \label{line:sparsification}
		\State \label{line:rounding}$M\leftarrow \textsc{Static-Weighted-Match}(S,\eps)\cup \text{supp}(\vec{z}^C)$. \Comment{The matching $M$ that is output.}
		\State $\textsc{CounterM}\leftarrow 0$ \Comment{Counter for deletions to the integral matching.}
		\State $\textsc{CounterX}\leftarrow 0$ \Comment{Counter for deletions to the fractional matching.}
		\Procedure{For deletion of edge $e$ from $H$}{}
		\If{$e\in \text{supp}(\vec{x})$} 
		\State Delete $e$ from $\text{supp}(\vec{x})$
		\State Update $\vec{z}^C$ accordingly. \Comment{See \Cref{line:yz}.}
		\State \label{line:sparsification-update}Update $\vec{y}^C$ accordingly; \textsc{Sparsification}($\vec{y}^C,\eps$) from Line \ref{line:sparsification} then updates $S$.
		\State \textsc{CounterX} $\leftarrow$ \textsc{CounterX} $+\ w(e)\cdot x(e)$.
		\EndIf
		\If{\textsc{CounterX} $>\eps\cdot \mu$} \label{line:counterx}
		\State Goto \Cref{line:recomp}  \Comment{End current phase and start new one.}
		\EndIf
		\If{$e\in M$}
		\State Delete $e$ from $M$
		\State \textsc{CounterM} $\leftarrow$ \textsc{CounterM} $+\ w(e)$ 
		\EndIf
		\If{$\textsc{CounterM}>\eps\cdot \mu$} \Comment{The matching $M$ is out of date}
		\State $\textsc{CounterM}\leftarrow 0$
		\State \label{line:rebuild}$M\leftarrow \textsc{Static-Weighted-Match}(S,\eps)\cup \text{supp}(\vec{z}^C)$ \Comment{Recompute outputted matching $M$}
		\EndIf
		\EndProcedure
	\end{algorithmic}
\end{algorithm}

We now restate and prove \Cref{thm:main2}.

\decmatching*

\begin{proof}[Proof of \Cref{thm:main2}]
We first argue that \Cref{alg:decmatching} satisfies the properties of \Cref{thm:main2}. First note that when the \Cref{alg:decmatching} executes \Cref{line:funcinfty}, note that $\mu'\leqslant (1-3\eps)\cdot \mu$. Since $\mu'\geqslant \mu(H)$, this implies that $\mu(H)\leqslant (1-2\eps)\cdot \mu$. This already shows \ref{item:main2a}. We now want to show \ref{item:main2b}. In order to show this, first note \Cref{lem:MorE}\ref{item:MorEa} states that $\sum_{e\in E}w(e)\cdot x(e)\geqslant (1-\varepsilon)\cdot \textsf{mwm}(H)$. Note that the output matching $M$ is a combination of \text{supp}($\vec{z}^C$) and the integral matching computed on the output of \textsc{Sparsification}($\vec{y}^C, \eps$). Recall that \text{supp}($\vec{z}^C$) is a matching by \Cref{prop:integpart}. Also, by \Cref{prop:vertexdisjoint}, these matchings are on disjoint set of vertices. Now we want to show that $w(M)\geqslant (1-\varepsilon)\cdot\textsf{mwm}(G)$. \\ \\
In order to see this, first note that $\vec{y}^C$ satisfies the premise of \Cref{lem:sparsification}. Consequently, $\textsf{mwm}(S)\geqslant (1-\eps)\cdot \sum_{e\in E}w(e)\cdot y^C(e)$. Thus, we have the following line of reasoning:
\begin{align*}
w(M)&\geqslant \sum_{e\in \textsf{supp}(\vec{z}^C)}w(e)+w(\textsc{Static-Weighted-Match}(S,\eps))\\
&\geqslant \sum_{e\in \textsf{supp}(\vec{z}^C)}w(e)+(1-\eps)\cdot \textsf{mwm}(S)\\
&\geqslant \sum_{e\in \textsf{supp}(\vec{z}^C)} w(e)+(1-\eps)^2\cdot \sum_{e\in E}w(e)\cdot y^C(e)\\
&\geqslant (1-2\eps) \sum_{e\in E} w(e)\cdot x(e)\\
&\geqslant (1-3\eps)\cdot \textsf{mwm}(H)
\end{align*}
We use the following lemma to bound the runtime of \textsc{Dec-Matching}(). We will give a formal proof of this lemma in \Cref{sec:MorE}.
\begin{lemma2}\label{lem:callstoMorE}
The subroutine \textsc{WeightedM-or-E*}() and Line \ref{line:recomp} are called at most $O(\alpha_{\eps}^3\cdot \log^2n)$ times in \Cref{alg:decmatching} during the course of $m$ deletions. 
\end{lemma2}
\paragraph{Runtime} Finally, we want to analyze the runtime of this algorithm. First, from \Cref{lem:callstoMorE} we can conclude that \Cref{line:recomp} and \textsc{WeightedM-or-E*}() is called at most $O(\alpha_{\eps}^3\cdot \log^2n)$ many times. Moreover, from \Cref{lem:MorE}, we can conclude the total runtime of these routines is $O(m\cdot W\cdot\log n)$, thus the total runtime over all edge deletions is at most $O(m\cdot W\cdot\log^3 n\cdot \alpha_{\eps}^3)$.\\ \\
Next, we upper bound the number of phases. From \Cref{lem:callstoMorE}, we can conclude that the number of phases is upper bounded by $O(\alpha_{\eps}^3\cdot \log n)$. 
In every phase, we initialize the procedure \textsc{Sparsification}(). This takes $O(m\cdot \poly(\log n, \eps^{-1}))$ time to initialize. Thus, the total time over all phases, is $O(m\cdot \poly(\log n, \eps^{-1})\cdot \alpha_{\eps}^3)$. Additionally, \textsc{Sparsification}() is dynamic, and it maintains the sparsifier in $O(\poly(\log n, \nicefrac{1}{\eps}))$ update time. Finally, we also note that \Cref{line:rebuild} runs a static weighted matching algorithm on $S$ each time the total weight of $M$ drops by a $(1-\varepsilon)$ factor. This means, the graph has been subjected to at least $\frac{\eps\cdot \textsf{mwm}(G)}{W}$ deletions. Next, observe by \Cref{lem:sparsification}, $|E(S)|=O(\textsf{mwm}(H))$. Thus, from \Cref{lem:staticmatch}, we can conclude that \Cref{line:rebuild} takes amortized $O(W)$ time over all updates. 
\end{proof}

The subsequent section is dedicated to the proof of \Cref{lem:callstoMorE}.

\section{Proof of \Cref{lem:callstoMorE}}\label{sec:MorE}

In order to prove the unweighted version of \Cref{lem:callstoMorE}, \cite{AssadiBD22} and \cite{BPT20} use the framework of congestion balancing. Here, we show that this framework extends to the case of general weighted graphs as well.  

\begin{definition2}
Let $G$ be a multigraph, and let $\kappa$ be the capacity function on the edges of the graph (if the adversary deletes an edge $e$, then let $\kappa(e)$ denote its capacity right before it is deleted). Let $\mu$ be the input to \textsc{Dec-Matching}(), when it is run on $G,\eps$. Let $\mathcal{M}=\set{M\mid w(M)\geqslant (1-\eps)\cdot \textsf{mwm}(G)}$. Define the cost of an edge $e$, $c(e)=\log (n\cdot \kappa(e))$. For any integral matching $M$, define $\sum_{e\in M}c(e)\cdot w(e)=\sum_{e\in M}w(e)\cdot \log (n\cdot \kappa(e))$. We also define $\Pi(G,\kappa)=\min_{M\in \mathcal{M}}\sum_{e\in M} c(e)\cdot w(e)$. This corresponds to the min-cost matching in $\mathcal{M}$. If $\mathcal{M}=\emptyset$, then $\Pi(G,\kappa)=\infty$. 
\end{definition2}

\begin{observation2}
Initially, $\kappa(e)=\nicefrac{1}{\alpha_{\eps}^{\lceil 2\cdot\log_{\alpha_{\eps}}n \rceil}}$, so $\Pi(G,\kappa)=0$. Thereafter, since capacities can only be increased, and edges can only be deleted, $\Pi(G,\kappa)$ can only increase.
\end{observation2}

\begin{observation2}
When $\Pi(G,\kappa)=\infty$, then Line \ref{line:funcinfty} of \Cref{alg:decmatching} causes the algorithm to terminate. 
\end{observation2}

\begin{lemma2}\label{lem:upperbdpotential}
In \textsc{Dec-Matching}, we say that we have started a new phase, when $\eps\cdot\mu$ weight of the fractional matching has been deleted, that is, $\textsc{CounterX}$ has increased to $\eps\cdot \mu$ (see \Cref{line:counterx}). Suppose we are in a phase where the algorithm does not terminate and let $\kappa$ be the capacity function right before we process deletions, then, $\Pi(G,\kappa)=O(\mu\cdot \log n)$.
\end{lemma2}

\begin{proof}
First observe that for any edge $e\in E$, $\kappa(e)\leqslant 1$, so $c(e)\leqslant \log n$. This is implied by the fact that $E^*$ only increases capacity along edges with $\kappa(e)<1$ (see \Cref{lem:MorE}\ref{item:MorEb}), and due to the fact that initially,  $\kappa(e)=\nicefrac{1}{\alpha_{\eps}^{\lceil 2\cdot \log_{\alpha_{\eps}}n \rceil}}$, and each time we increase $\kappa(e)$, we multiply it by a factor of $\alpha_{\eps}$. Moreover, we know that for any matching $M$, $w(M)\leqslant (1+\eps)\cdot \mu$, since $\mu\geqslant (1-\eps)\cdot\textsf{mwm}(G)$. Thus, $\sum_{e\in M} w(e)\cdot c(e)=O(\mu \log n)$. This implies that $\Pi(G,\kappa)=O(\mu\log n)$. 
\end{proof}

\begin{definition2}
Let $E_0$ be the initial set of edges, then we let capacitated weight of $E_0$ to be defined as $w(\kappa(E_0))=\sum_{e\in E_0}w(e)\cdot \kappa(e)$. If $e$ is deleted by the adversary, then $\kappa(e)$ is the capacity of $e$ right before deletion.
\end{definition2}

\begin{lemma2}\label{lem:potupperbd}
Suppose a call to \textsc{WeightM-or-E*}($G,\mu,\kappa, \eps$) returns the set $E^*$ instead of a matching. Let $\kappa'$ denote the new edge capacities after increasing capacities along $E^*$. Then,
\begin{enumerate}
\item $w(\kappa'(E_0))\leqslant w(\kappa(E_0))+\alpha_{\eps}\cdot \mu\cdot \log n$, and \label{item:potupperbda}
\item $\Pi(G,\kappa')\geqslant \Pi(G,\kappa)+\nicefrac{\mu}{\eps}$.
\end{enumerate}
\end{lemma2}
\begin{proof}
We begin by recalling \Cref{lem:MorE}\ref{item:MorEb}, which states that $w(\kappa(E^*))=O(\mu\log n)$. Consequently, we have,
\begin{align*}
w(\kappa'(E_0))&=w(\kappa(E_0\setminus E^*))+w(\kappa'(E^*))\\
		    &\leqslant w(\kappa(E_0))+O(\mu\cdot \alpha_{\eps}\cdot \log n)
\end{align*}
The last inequality follows from \Cref{lem:MorE}\ref{item:MorEb}, and due to the fact that we multiply capacities along $E^*$ by $\alpha_{\eps}$. To prove the second part of the claim, we have, again use the fact that every $M\in \mathcal{M}$ has the property that $w(M\cap E^*)\geqslant \eps\cdot \textsf{mwm}(G)$. Consequently, we have for every matching $M\in \mathcal{M}$, $c(w(M))$ increases by $\sum_{e\in M\cap E^*}w(e)\cdot \log \alpha_{\eps}\geqslant \nicefrac{\textsf{mwm}(G)}{\eps}$. Thus, the second part of the claim follows. 
\end{proof}

\begin{observation2}\label{obs:calltype1}
The total number of calls to \textsc{WeightedM-or-E*}() (till \textsc{Dec-Matching}() terminates) that return $E^*$ are upper bounded by $O(\eps\cdot \log n)$. This is because each such call increases $\Pi(G,\kappa)$ by $\nicefrac{\mu}{\eps}$, and by \Cref{lem:upperbdpotential}, we know that $\Pi(G,\kappa)$ is upper bounded by $\mu\log n$. Additionally, we can conclude that $w(\kappa(E_0))\leqslant \alpha_{\eps}\cdot \mu\cdot \eps\cdot \log ^2n$, and this is due to \Cref{lem:potupperbd}\ref{item:potupperbda}, and the fact that the number of calls to \textsc{WeightedM-or-E*}() that return $E^*$ are at most $O(\eps\cdot \log n)$ many. 
\end{observation2}

\begin{lemma2}\label{lem:calltype2}
The total number of phases is at most $O(\alpha_{\eps}^3\log ^2n)$. Therefore, the total number of times, a call to $\textsc{WeightedM-or-E*}$() returns a fractional matching is at most $O(\alpha_{\eps}^3\log ^2n)$.
\end{lemma2}
\begin{proof}
Note that we begin a new phase when \textsc{CounterX} grows from $0$ to $\eps\cdot\mu$. That is, the weight of the current fractional matching $\vec{x}$ has dropped by $\eps\cdot\mu$. Let $E_{\textsf{del}}$ be the set of edges that have been deleted from the multigraph. Let $\Phi_{\textsf{del}}=\sum_{e\in E_{\textsf{del}}} \kappa(e)\cdot w(e)$. This upper bounded by $w(\kappa(E_0))\leqslant \alpha_{\eps}\cdot \mu\cdot \eps\cdot \log^2 n$. Note that each time we begin a new phase, $\Phi_{\textsf{del}}$ increases by at least $\nicefrac{\eps\mu}{\alpha_{\eps}^2}$ (one can conclude this since \Cref{prop:excessflow} holds). Thus, the total number of phases is at most $O(\alpha_{\eps}^3\cdot \log ^2n)$.
\end{proof}

\begin{proof}[Proof of \Cref{lem:callstoMorE}]
There are two types of calls to the algorithm \textsc{WeightedM-or-E*}(): ones that return $E^*$, and the others which return a good fractional matching. From \Cref{obs:calltype1} and \Cref{lem:calltype2}, we can conclude that the total number of calls to \textsc{WeightedM-or-E*}() is at most $O(\alpha_{\eps}^3\log^2 n)$.
\end{proof}

The rest of the paper is dedicated to giving an algorithm for \textsc{WeightedM-or-E*}(), and this culminates in a proof for \Cref{lem:MorE}, which is our main technical contribution.

\section{Building Blocks for \textsc{WeightedM-or-E*}()}

In this section, we prove the structural lemmas needed for each of the phases, and in order to do so, we need some probabilistic tools, for example Chernoff Bound:

\begin{lemma2}[\cite{DP09}]\label{lem:chernoff}
Let $X_1,X_2,\cdots,X_k$ be $k$ negatively correlated random variables, and let $X=\sum_{i=1}^k X_i$. Suppose $\mu=\expect{X}$, and $\mu_{\text{min}}\leqslant \mu \leqslant \mu_{\text{max}}$. Then, we have,
\begin{align*}
\prob{X\geqslant (1+\delta)\cdot \mu_{\text{max}}}\leqslant \paren{\frac{e^{\delta}}{\paren{1+\delta}^\delta}}^{\mu_{\text{max}}}.
\end{align*}
\end{lemma2}

Additionally, we will also need Bernstein's inequality.

\begin{lemma2}[\cite{MMR94}]\label{lem:concentrate}
	Let $X$ be the sum of negatively associated random variables $X_1,\cdots, X_k$ with $X_i\in [0,M]$ for each $i\in [k]$. Then, for $\sigma^2=\sum_{i=1}^k \var{X_i}$ and all $a>0$, 
	\begin{align*}
	\prob{X>\expect{X}+a}\leqslant \exp\paren{\frac{-a^2}{2\paren{\sigma^2+\nicefrac{a\cdot M}{3}}}}
\end{align*}
\end{lemma2}

\subsection{Phase 1 of \textsc{WeightedM-or-E*}()}

Recall that we used \textsf{mwm}($G,\kappa$) to denote the weight of the maximum weight fractional matching of $G$ obeying $\kappa$ as well as the odd set constraints. As in the congestion balancing step, we want to estimate $\textsf{mwm}(G,\kappa)$. However, existing techniques only work for bipartite or general unweighted graphs. In this section, we extend this to the case of general weighted graphs. In particular, we extend the structural lemma to show that if in a weighted graph $G$, we sample every edge with probability $p(e)=\min\set{1,\kappa(e)\cdot \rho_{\eps}}$ to create a graph $G_s$, then $\textsf{mwm}(G_s)\geqslant \textsf{mwm}(G,\kappa)-\eps\cdot \textsf{mwm}(G)$. Thus, now we can estimate $\textsf{mwm}(G_s)$ by using existing results. 

\begin{lemma2}\label{lem:phase1}
Let $G$ be an integer weighted multigraph with weights in $\set{1,2,\cdots, W}$,  and with $\textsf{mwm}(G)\geqslant \max\set{\nicefrac{\eps\cdot n}{16\cdot W},\nicefrac{\log n}{\eps^4}}$, where $\eps\in (0,\nicefrac{1}{2})$. Let $\kappa$ be a capacity function on the edges of the graph, and let $G_s$ be obtained by sampling every edge $e$ independently with probability $p(e)=\min\set{\kappa(e)\cdot \rho_{\eps},1}$. Let $\textsf{mwm}(G,\kappa)$ be the weight of the maximum weight fractional matching of $G$ that obeys the odd set constraints and the capacity constraints. Suppose $\textsf{mwm}(G,\kappa)\geqslant (1-\eps)\cdot \textsf{mwm}(G)$, then, with high probability, $\textsf{mwm}(G_s)\geqslant \textsf{mwm}(G,\kappa)-\eps\cdot \textsf{mwm}(G)$.
\end{lemma2}


\begin{proof}[Proof of \Cref{lem:phase1}]
Let $\vec{x}$ denote the fractional matching that realizes \textsf{mwm}($G,\kappa$). In order to prove the statement, we will construct a vector $\vec{z}$ in the support of $G_s$. This vector $\vec{z}$ will have the following properties: it will satisfy the fractional matching constraints and small blossom constraints with high probability. Additionally, it will also have the property that $\sum_{e\in E)}w(e)\cdot z(e)\geqslant (1-\eps)\cdot \sum_{e\in E}w(e)\cdot x(e)\geqslant \textsf{mwm}(G,\kappa)-\eps\cdot \textsf{mwm}(G)$ with high probability as well.\\ \\
Since with high probability $G_s$, contains a fractional matching $\vec{z}$ satisfying the above properties, there is an integral matching $M$ in support of $G_s$ with $w(M)\geqslant (1-\eps)\cdot \sum_{e\in E}w(e)\cdot z(e)\geqslant \textsf{mwm}(G,\kappa)-\eps\cdot \textsf{mwm}(G)$ (see \Cref{obs:litteflow}). \\ \\
Let $X_e$ denote a random variable that takes value $1$ if $e\in H_s$ and value $0$ otherwise. We define $\vec{z}$ as follows:
\begin{align*}
z(e)=X_e\cdot \frac{x(e)}{\min\set{1,\kappa(e)\cdot \rho_{\eps}}}\cdot (1-\eps)
\end{align*}
Note that $z(e)$ are independent random variables. The following is true.
\begin{align*}
\expect{\sum_{e\in E}z(e)\cdot w(e)}=(1-\eps)\cdot \sum_{e\in E} w(e)\cdot x(e)
\end{align*}
Since this is a sum of independent random variables, using Chernoff bound (see \Cref{lem:chernoff}), we have,
\begin{align*}
\prob{\sum_{e\in E}z(e)\cdot w(e)\leqslant \sum_{e\in E} w(e)\cdot x(e)-2\cdot\eps\cdot \textsf{mwm}(G)}=\exp\paren{\nicefrac{-\eps^2\textsf{mwm}(G)}{2}}
\end{align*}
By assumption, $\textsf{mwm}(G)\geqslant \nicefrac{100\cdot \log n}{\eps^4}$, we have that the above claim holds with probability at least $1-O(\nicefrac{1}{n^{\nicefrac{1}{\eps}}})$. We will now show that it obeys fractional matching constraints with high probability as well. Consider an edge $e$ with $\kappa(e)>\nicefrac{1}{\rho_{\eps}}$, we know that $e\in G_s$. Consequently, we have $\var{z(e)}=0$ for such an edge, since $z(e)=x(e)$ always. For any edge with $\kappa(e)<\nicefrac{1}{\rho_{e}}$, we have,
\begin{align*}
\var{z(e)}&\leqslant (1-\eps)^2\cdot  \paren{\frac{x(e)}{\kappa(e)\cdot \rho_{\eps}}}^2\cdot p(e)= (1-\eps)^2\frac{x(e)^2}{\kappa(e)\cdot \rho_{\eps}}\leqslant (1-2\eps)\cdot \frac{x(e)}{\rho_{\eps}}
\end{align*}
Additionally, since $z(e)$ are independent random variables, we have $\var{\sum_{v\ni e}z(e)}=\sum_{v\ni e} \var{z(e)}\leqslant (1-2\eps)\cdot \rho_{\eps}^{-1}$. Additionally, we know that $\expect{\sum_{e\ni v} z(e)}\leqslant (1-\eps)$. Thus, we want to compute the probability of the event that $\sum_{e\ni v} z(e)\geqslant 1$. Note that, in order to do this, it is sufficient to consider the edges $e$ for which $\kappa(e)<\rho_{\eps}^{-1}$, since for edges other than this, $z(e)=x(e)$. Thus, we have, by \Cref{lem:concentrate}, with $a=\eps$, $M=\rho_{\eps}^{-1}$,
\begin{align*}
\prob{\sum_{e\ni v}z(e)>1}&\leqslant \exp\paren{\frac{-\eps^2}{\frac{1}{\rho_{\eps}}+\frac{\eps}{\rho_{\eps}} }}=O\paren{\frac{1}{n^{\nicefrac{1}{\eps}}}}
\end{align*}
\end{proof}
Taking a union bound over all vertices in the graph, we have our claim. Next, we want to compute the probability that $\vec{z}$ also satisfies small odd set constraints. To see this, consider any odd set $B$ such that $\card{B}\leqslant \nicefrac{1}{\eps}$. We know that by definition of $\vec{x}$, we have,
\begin{align*}
\sum_{e\in G[B]} x(e)\leqslant \frac{\card{B}-1}{2} \text{ and,}\\
\expect{\sum_{e\in G[B]}z(e)}\leqslant (1-\eps)\cdot \frac{\card{B}-1}{2}
\end{align*}
Similarly, we can bound variance as well,
\begin{align*}
\var{\sum_{e\in G[B]} z(e)}\leqslant (1-\eps)\cdot \frac{\card{B}-1}{2}\cdot \frac{1}{\rho_{\eps}}
\end{align*}
Consider the following subclaim.
\begin{observation2}
    Let $B$ be any odd set such that $3\leqslant \card{B}\leqslant \nicefrac{1}{\eps}$. Then,
    \begin{align*}
    \frac{\card{B}-1}{2}=(1-\eps)\cdot \frac{\card{B}-1}{2}+\eps\cdot\frac{\card{B}-1}{2}\geqslant (1-\eps)\cdot \frac{\card{B}-1}{2} +\eps
    \end{align*}
\end{observation2}
Let $\mathcal{E}_{B}$ be the event that $\sum_{e\in G[B]}z(e)\geqslant \frac{\card{B}-1}{2}$,
\begin{align*}
\prob{\mathcal{E}_{B}}&\leqslant \prob{\sum_{e\in G[B]}z(e)\geqslant (1-\eps)\cdot \frac{\card{B}-1}{2}+\eps}\\
&\leqslant \exp\paren{-\frac{\eps^2}{\frac{\card{B}-1}{2}\cdot \frac{1}{\rho_{\eps}}+\frac{\eps}{\rho_{\eps}}}}\\
				  &=O\paren{\frac{1}{n^{{\nicefrac{1}{\eps}}}}}	
\end{align*}
The second equality follows from the fact that we are considering small blossoms, that is, $\card{B}\leqslant \nicefrac{1}{\eps}$. Taking a union bound over all small blossoms, we have our claim. 

\subsection{Phase 2 of \textsc{WeightedM-or-E*}()}

The algorithm \textsc{WeightedM-or-E*}() proceeds to Phase 2 only if in Phase 1, if $\textsf{mwm}(G,\kappa)\geqslant (1-\eps)\cdot \textsf{mwm}(G)$ (\Cref{lem:phase1}). In Phase 2, we now want to construct a good fractional matching. 

\begin{definition2}
Let $G$ be a multigraph, we define $E_{L}=\set{e\in E\mid e\in D_{i}(u,v), \kappa(D_{i}(u,v))\leqslant \nicefrac{1}{\alpha_{\eps}^2}}$. Similarly, we define $\kappa^+(e)=\kappa(e)\cdot \alpha_{\eps}$. Intuitively, these correspond to the low capacity edges. 
\end{definition2}

\begin{lemma2}\label{lem:sampling2}
Let $G$ be a weighted multigraph with maximum edge weight $W$. Suppose $\textsf{mwm}(G)\geqslant \max\set{\nicefrac{\eps\cdot n}{16\cdot W},\nicefrac{\log n}{\eps^4}}$ Then, with high probability, for all $X\subseteq V$, we have
\begin{align*}
\textsf{mwm}(G_s[X])\leqslant \textsf{mwm}(G[X]\cap E_L,\kappa^+)+\eps\cdot \textsf{mwm}(G)
\end{align*}
\end{lemma2}
\begin{proof}
Consider a fixed $X$, and from now on, we use $H\coloneqq G[X]\cap E_L$ and $H_s\coloneqq G_s[X]\cap E_L$. In order to prove this, we will make use of a primal-dual argument. Note that the problem of finding a $
\textsf{mwm}(H,\kappa^+)$ can be formulated as a linear program as follows.

\begin{equation}
\begin{array}{ll@{}ll}
\text{maximize}  & \displaystyle\sum\limits_{e\in E} w(e)\cdot x(e) &\\
\text{subject to}& \\ \\
\sum_{e\ni v} x(e)&\leqslant 1 &\forall v\in X\\
\sum_{e\in G[B]} x(e)&\leqslant \frac{|B|-1}{2}& \forall B\in X_{\text{odd}}\\
x(e)&\leqslant \kappa^+(e)& \forall e\in E
\end{array}
\end{equation}

The corresponding dual program is as follows:

\begin{equation}
\begin{array}{ll@{}ll}
\text{minimize}  & \displaystyle\sum\limits_{u\in V} y_u +\sum_{B\subseteq X_{\text{odd}}} r(B)\cdot \paren{\frac{\card{B}-1}{2}} + \sum_{e\in E(H)} z(e)\cdot \kappa^{+}(e)&\\
\text{subject to}& \\
&\displaystyle y_{u}+y_{v}+z(e) + \sum\limits_{B: (u,v)\in G[B]} r(B)\geqslant w(e)& \forall e\text{ between }u,v, \forall u,v\in X\\
\end{array}
\end{equation}
Let $f(y,z,r)$ denote the optimal of the dual program, then by strong duality, we know that $\textsf{mwm}(H,\kappa^+) = f(y,z,r)$. Similarly, for any uncapacitated graph we have the same primal and dual programs, except we don't have the third constraint in the primal program, and in the dual program we omit the $z$ variables. 
\begin{equation}
\begin{array}{ll@{}ll}
\text{maximize}  & \displaystyle\sum\limits_{e\in E} w(e)\cdot x'(e) &\\
\text{subject to}& \\ \\
\sum_{e\ni v} x'(e)&\leqslant 1 &\forall v\in X\\
\sum_{e\in G[B]} x'(e)&\leqslant \frac{|B|-1}{2}& \forall B\in X_{\text{odd}}
\end{array}
\end{equation}

The corresponding dual program is as follows:

\begin{equation*}
\begin{array}{ll@{}ll}
\text{minimize}  & \displaystyle\sum\limits_{u\in V} y'_u +\sum_{B\subseteq X_{\text{odd}}} r'(B)\cdot \paren{\frac{\card{B}-1}{2}}&\\
\text{subject to}& &\\
\displaystyle y'_{u}+y'_{v}+ &\sum\limits_{B: (u,v)\in G[B]} r'(B)\geqslant w(e)& \forall e\text{ between }u,v, \forall u,v\in X\\
\end{array}
\end{equation*}

Let $g(y',r')$ denote the optimal for dual program corresponding to $H_s$. Note that this is a random variable, since $H_s$ is a random graph. We will show that with high probability,
\begin{align*}
g(y',r')\leqslant f(y,r,z)+\varepsilon\cdot \textsf{mwm}(G)
\end{align*}
By weak duality, we have, $\textsf{mwm}(H_s)\leqslant g(y',r')$, and $f(y,r,z)=\textsf{mwm}(H,\kappa^+)$. This will show our claim for a fixed $H$, and then we take a union bound over all $H$.\\ \\
We will use $\set{\set{y_{u}}_{u\in X}, \set{r(B)}_{B\in X_{\text{odd}}}}$ to get a solution for the dual program for $H_s$. We will refer to this set of dual variables as an attempted cover. Let $E'=\set{e\mid z(e)>0}$. Observe that the edges left uncovered by attempted cover of $H_s$ are precisely a subset of $E'$. We now modify the cover as follows.
\begin{align*}
\Delta y_u&=\sum_{e\ni v,e\in H_s} z(e)\\
y_u'&=y_u+\Delta y_u
\end{align*}
Note that $\set{\set{y_u'}_{u\in X}, \set{r(B)}_{B\in X_{\text{odd}}}}$ is a valid cover of $H_s$. To see this, consider edge $e\in H_s$ and let $u,v$ be the endpoints of $H_s$. Suppose $e\notin E'$, then, 
\begin{align*}
w(e)&\leqslant y_u+y_v+z(e)+\sum_{e\in G[B]} r(B)\\
&\leqslant y_u+y_v+\sum_{e\in G[B]} r(B)\\
&\leqslant y'_u+y'_v+\sum_{e\in G[B]} r(B)
\end{align*}
Similarly, for an edge $e\in E'$, we have,
\begin{align*}
w(e)&\leqslant y_u+y_v+z(e)+\sum_{e\in G[B]} r(B)\\
&\leqslant y_u+y_v+\Delta y_u+\Delta y_v+\sum_{e\in G[B]} r(B)\\
&\leqslant y'_u+y'_v+\sum_{e\in G[B]} r(B)
\end{align*}
Moreover, $g(y',r')=f(y,r,z)+\sum_{u\in X} \Delta y_u$. Thus, it is sufficient to bound $\sum_{u\in X}\Delta y_u$.
\begin{align*}
\expect{\sum_{u\in X}\Delta y_{u}}&=\sum_{e\in E} z(e)\cdot \prob{e\in H_s}\\
&\leqslant \sum_{e\in E} z(e)\cdot \kappa(e)\cdot \rho_{\eps}\\
						   &\leqslant \sum_{e\in E} \frac{z(e)\cdot \kappa^{+}(e)}{2^{\nicefrac{W}{\eps^2}}}\\
						   &\leqslant \textsf{mwm}(H,\kappa^{+})\cdot 2^{\nicefrac{-W}{\eps^2}}
\end{align*}
Using Chernoff and choosing $\delta = \frac{\eps\cdot \textsf{mwm}(G)\cdot 2^{\nicefrac{W}{\eps^2}}}{\textsf{mwm}(H,\kappa^+)}$, we have,
\begin{align*}
\prob{\sum_{u\in X}\Delta y_u\geqslant 2\cdot \eps\cdot \textsf{mwm}(G)}&\leqslant \prob{\sum_{u\in X}\Delta y_u\geqslant \textsf{mwm}(H,\kappa^{+})\cdot 2^{\nicefrac{-W}{\eps^2}}+\varepsilon\cdot \textsf{mwm}(G)}\\
&\leqslant \exp\paren{\eps\cdot \textsf{mwm}(G) -\eps\cdot \textsf{mwm}(G) \cdot \log \paren{1+\eps\cdot 2^{\nicefrac{W}{\eps^2}}}}\\
&\leqslant \exp\paren{-\eps\cdot \textsf{mwm}(G)\cdot \frac{W}{\eps^2}}
\end{align*}
By assumption, $n\leqslant \frac{\textsf{mwm}(G)\cdot W}{\eps}$, thus taking a union bound over all subsets, we have our theorem.
\end{proof}

\subsection{Phase 3: Finding Set E*}

The algorithm \textsc{WeightedM-or-E*}() proceeds to Phase 3, if $\textsf{mwm}(G,\kappa)$ is not large enough (see \Cref{lem:phase1}). At this point we have to increase capacity along some edges in the graph. Phase 3 finds such edges $E^*$, and makes sure it has that this set has the property that $\sum_{e\in E^*}w(e)\cdot \kappa(e)=O(\textsf{mwm}(G)\log n)$, and moreover, every good matching in $G$ has significant weight going through $E^*$. \\ \\
We will first start with describing the dual program corresponding to the general matching LP:

\begin{equation*}
\begin{array}{ll@{}ll}
\text{minimize}  & \displaystyle\sum\limits_{u\in V} y_u +\sum_{B\subseteq V_{\text{odd}}} r(B)\cdot \paren{\frac{\card{B}-1}{2}} \\
\text{subject to}& \displaystyle y_{u}+y_{v}+ \sum\limits_{B: (u,v)\in G[B]} r(B)\geqslant w(u,v) & & \forall (u,v)\in E 
\end{array}
\end{equation*}
Additionally, we define $yr(e)$ to be the dual constraint corresponding to the edge $e$, and$f(y,r)$ to be the value of the objective function. We now describe some properties of \textsc{Static-Weighted-Match}($H,\eps$). We state these properties without proof for now and postpone the proof to the appendix. 
\begin{restatable}{lemma2}{swm}\label{lem:mwm}\cite{DP14}
There is an $O(\nicefrac{m}{\eps}\cdot \log \nicefrac{1}{\eps})$ time algorithm \textsc{Static-Weighted-Match}() that takes as input, a weighted graph $G$, and a parameter $\eps>0$, and outputs an integral matching $M$, and dual vectors $\vec{y}$ and $\vec{r}$ with the following properties.
\begin{enumerate}
\item It returns an integral matching $M$ such that $w(M)\geqslant (1-\eps)\cdot \textsf{mwm}(G)$ \label{item:mwma}
\item A set $\Omega$ of laminar odd-sized sets such that $\set{B\in V_{\text{odd}}\mid r(B)>0}\subseteq \Omega$. \label{item:mwmb}
\item For all odd-sized sets $B$ such that $\card{B}\geqslant \nicefrac{1}{\eps}+1$, $r(B)=0$. \label{item:mwmc}
\item For all $v\in V$, $y(v)$ is an integral multiple of $\eps$, and for all $B$ with $\card{B}$ odd, $r(B)$ is an integral multiple of $\eps$. \label{item:mwmd}
\item For each edge $e\in E$, we have $yr(e)\geqslant (1-\eps)\cdot w(e)$, that is $e$ is \emph{approximately covered by} $\vec{y}$ and $\vec{r}$. \label{item:mwme}
\item The value of the dual objective, $f(y,r)$ is at most $(1+\eps)\cdot \textsf{mwm}(G)$. \label{item:mwmf}
\end{enumerate}
\end{restatable}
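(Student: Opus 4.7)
The plan is to follow the Duan--Pettie scaling framework from \cite{DP14}, which we sketch here. First, scale the edge weights so that they are positive integers, then work with $\eps' = \Theta(\eps)$ and rescale once more so that it suffices to maintain dual variables $y(v)$ and $r(B)$ that are integer multiples of $\eps$; this immediately gives item \ref{item:mwmd}. The algorithm proceeds in $O(\log(1/\eps))$ scales, each scale halving the unit of granularity. At the start of a scale we inherit a matching $M$ and a dual solution $(\vec y,\vec r)$ defined on a laminar family $\Omega$ of odd sets. Within a scale we perform a bounded number of approximate Hungarian-style augmentation phases: identify \emph{eligible} edges (those whose dual slack is below the current granularity), run an Edmonds-style search on this subgraph to find augmenting paths, and either augment $M$, contract a new blossom, expand an old one, or raise duals uniformly on alternating-tree vertices.

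The key departures from exact Edmonds are (i) the algorithm never creates a blossom of size greater than $1/\eps$, and (ii) the search halts as soon as the dual raise at the current scale would bring the cumulative dual excess above $\eps \cdot w(e)$ on some uncovered edge. Item (i) is justified by a counting argument: at any moment the duals $r(B)$ live on a laminar family, so the contribution $\sum_{|B|\ge 1/\eps+1} r(B)\cdot(|B|-1)/2$ can be absorbed into the $\eps$-slack afforded to the primal (essentially, shrinking a blossom of size $>1/\eps$ only improves the optimum by at most an $\eps$-fraction, so we can set $r(B)=0$ for all such $B$ without harming the $(1+\eps)$-factor bound on the dual). This gives items \ref{item:mwmb} and \ref{item:mwmc}. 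Item (ii) is enforced explicitly and yields the approximate-cover property $yr(e)\ge (1-\eps)\cdot w(e)$ for all $e\in E$, which is item \ref{item:mwme}.

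With the approximate covering and the structural restrictions in hand, items \ref{item:mwma} and \ref{item:mwmf} follow from a single primal--dual inequality: summing the near-tight dual constraint over $e\in M$ and using that $M$ saturates all vertices with $y(v)>0$ and all blossoms with $r(B)>0$ up to an $O(\eps)$ slack gives $w(M)\ge (1-O(\eps))\,f(y,r)$, while weak duality gives $f(y,r)\ge \textsf{mwm}(G)$; rescaling $\eps$ by a constant yields items \ref{item:mwma} and \ref{item:mwmf} simultaneously. For the runtime, each scale processes each edge $O(1/\eps)$ times in amortized fashion (the bounded blossom size means that blossom expansion/contraction work is charged to $O(1/\eps)$ per edge), and there are $O(\log(1/\eps))$ scales, giving the claimed $O((m/\eps)\log(1/\eps))$ bound.

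The main obstacle in executing this plan rigorously is managing the laminar blossom structure alongside the dual updates so that (a) the cap $|B|\le 1/\eps$ is never violated during contraction, and (b) the amortized per-edge work across blossom expansions remains $O(1/\eps)$. Duan--Pettie handle this with a carefully designed \emph{eligibility}-based search that avoids re-exploring edges whose duals have not changed since the last visit, together with a split--find data structure for blossoms; the correctness of this machinery, rather than the primal--dual inequalities above, is the delicate part of the argument.
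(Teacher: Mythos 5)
Your high-level plan matches the paper's in one respect: both defer to Duan--Pettie for most of the work. The paper simply cites \cite{DP14} (Property 3.1 and Lemma 2.3) for items (1), (2), (4), (5), (6), and then adds a short post-processing step to secure item (3). You instead attempt to re-derive the scaling framework from scratch; that is fine as a sketch, but the crucial step is item (3), and there your argument has a genuine gap.

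You assert ``(i) the algorithm never creates a blossom of size greater than $1/\eps$,'' and justify this by noting that the contribution of large blossoms to the dual objective can be absorbed into the $\eps$-slack, ``so we can set $r(B)=0$ for all such $B$.'' These are two distinct claims. The absorption argument is a \emph{post-hoc} observation about the output; it does not show that the Edmonds search in \cite{DP14} avoids contracting large blossoms, and it is not at all clear one can impose an in-search cap $|B|\leq 1/\eps$ without breaking the amortized runtime or the primal--dual invariants. More seriously, even as a post-processing step, simply zeroing $r(B)$ for large $B$ destroys item (5): for any edge $e$ with both endpoints inside $B$, the dual constraint $yr(e)=y_u+y_v+\sum_{B':e\in G[B']}r(B')$ drops by $r(B)$, and you no longer have $yr(e)\geq (1-\eps)w(e)$. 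The paper's fix supplies exactly the compensation you are missing: for every $B$ with $r(B)>0$ and $|B|\geq \nicefrac{3}{\eps}+1$, set $r(B)\leftarrow 0$ and simultaneously increase $y(v)$ by $\nicefrac{r(B)}{2}$ for each $v\in B$. This leaves $yr(e)$ unchanged for edges inside $B$, only increases it for edges crossing $\partial B$, preserves granularity and laminarity, and increases the dual objective by $\sum_B \nicefrac{r(B)}{2}\leq (\nicefrac{\eps}{3})(1+\nicefrac{\eps}{3})\cdot\textsf{mwm}(G)$ because each such $B$ has $|B|-1\geq \nicefrac{3}{\eps}$; the extra work is $O(\nicefrac{m}{\eps})$ since $\sum_B (|B|-1)$ is bounded by the dual objective divided by the $\Omega(\eps)$ granularity. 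Without some such compensating dual increase, your treatment of item (3) does not go through.
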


We now state our first claim. 

\begin{claim2}\label{claim:LotofMedges}
Suppose $H\subseteq G$, and $\vec{y}, \vec{r}$ are the dual vectors returned by \textsc{Static-Weighted-Match}($H,\eps$) and define $E_{H}=\set{e\in E\mid yr(e)\geqslant (1-\eps)\cdot w(e)}$. Let $M$ be any matching of $G$, then $w(M\cap E\setminus E_{H})\geqslant w(M)-(1+\eps)^2\cdot\textsf{mwm}(H)$. 
\end{claim2}
\begin{proof}
Observe that if we scale up the dual variables $\vec{y}$ and $\vec{r}$ by $(1+\eps)$, then $\vec{y}$ and $\vec{r}$ is a feasible solution for the dual matching problem for the graph $E_H$. Thus, by weak duality and \Cref{lem:mwm}\ref{item:mwmf}, we have $\textsf{mwm}(E_H)\leqslant (1+\eps)\cdot f(y,r)\leqslant (1+\eps)^2\cdot \textsf{mwm}(H)$. This follows from \Cref{lem:mwm}\ref{item:mwmf}. Thus, we have the following line of reasoning:
\begin{align*}
w(M)&=w(M\cap E_H)+w(M\cap E\setminus E_H)\\
	&\leqslant \textsf{mwm}(E_H)+w(M\cap E\setminus E_H)\\
	&\leqslant (1+\eps)^2\cdot \textsf{mwm}(H)+w(M\cap E\setminus E_H).
\end{align*}
This proves the claim. 
\end{proof}

We now describe the set $E^*$, and show that $w(\kappa(E^*))=O(\textsf{mwm}(G)\cdot \log n)$ with high probability. 

\begin{lemma2}\label{lem:setE*}
Let $G$ be a multigraph such that $\textsf{mwm}(G)\geqslant \nicefrac{\eps n}{W}$, and let $\kappa$ be the capacity function on the edges of the graph. Suppose $G_s$ is the graph obtained by sampling edge $e$ with probability $p(e)=\min\set{1, \rho_{\eps}\cdot \kappa(e)}$. Let $\vec{y},\vec{r}$ be the duals returned by \textsc{Static-Weighted-Match}($G_s,\eps$). Let $E^*=\set{e\mid yr(e)<(1-\eps)\cdot w(e)}$, then with high probability, $w(\kappa(E^*))=O(\textsf{mwm}(G)\cdot \log n)$. 
\end{lemma2}

\begin{proof}
We consider the set $\mathcal{D}$ of duals $\vec{y},\vec{r}$ that satisfy the following properties.
\begin{enumerate}
\item For all $v\in V$, $y(v)$ is a multiple of $\eps$, and for all $B\in V_{\text{odd}}$, $r(B)$ is a multiple of $\eps$. \label{item:a}
\item Let $\Omega=\set{B\mid r(B)>0}$, then $\Omega$ is laminar. \label{item:b}
\item If $r(B)>0$ for some $B$, then $\card{B}\leqslant \nicefrac{1}{\eps}$. \label{item:c}
\end{enumerate}
Observe that $\mathcal{D}$ contains all possible duals that could be returned by \textsc{Static-Weighted-Match}(). Now, we bound $\card{\mathcal{D}}$.
\begin{align*}
\card{\mathcal{D}}&\leqslant \sum_{i=0}^{2n} {n^{\nicefrac{1}{\eps}} \choose i}\cdot \paren{\frac{W}{\eps}}^{n}\cdot \paren{\frac{W}{\eps}}^{2n}\\
				&\leqslant 2n\cdot {n^{\nicefrac{1}{\eps}} \choose n}\cdot \paren{\frac{W}{\eps}}^{n}\cdot \paren{\frac{W}{\eps}}^{2n}\\
				&\leqslant 2n\cdot {n^{\nicefrac{1}{\eps}} \choose n}\cdot \paren{\frac{n^2}{\eps^3}}^{n}\cdot \paren{\frac{n^2}{\eps^3}}^{2n}\\
				&\leqslant 2^{\nicefrac{10}{\eps^3}\cdot n\cdot \log n}\\
				&\leqslant 2^{\nicefrac{10}{\eps^3}\cdot \frac{\textsf{mwm}(G)}{W}\cdot \log n}
\end{align*}
This follows from the following argument. Since $\Omega$ is laminar (due to \ref{item:b}), there are at most $2n$ sets contained in $\Omega$, and from \ref{item:c}, we can conclude that these laminar sets are chosen from among $n^{\nicefrac{1}{\eps}}$ sets. Similarly, \ref{item:a} suggests that each $y(v)$ and chosen $r(B)$ can be assigned $\nicefrac{W}{\eps}$ values, thus for a given choice of $\Omega$, there are at most $\paren{\frac{W}{\eps}}^{n}\cdot \paren{\frac{W}{\eps}}^{2n}$ choices for $\vec{y}$ and $\vec{r}$. We can derive the last set of equations by observing that $W\leqslant \nicefrac{n^2}{\eps^2}$ (see \Cref{obs:ratiominmaxweight})and by the premise of our lemma, we have that $\textsf{mwm}(G)\geqslant \nicefrac{\eps n}{W}$. Additionally, observe that $\mathcal{D}$ includes the set of duals that can be returned by $G$, and therefore $\card{\mathcal{D}}$ is an upper bound on the set of possible duals returned by \textsc{Static-Weighted-Match}(). \\ \\ 
Now, consider any $\vec{y}, \vec{r}\in \mathcal{D}$, and let $E_{y,r}=\set{e\mid yr(e)<(1-\eps)\cdot w(e)}$. Observe that for all $e\in E_{y,r}$, we have $\kappa(e)<\nicefrac{1}{\rho_{\eps}}$, since otherwise $\kappa(e)=1$, and $e$ would be included in $G_s$ with probability $1$. Thus, for all $e\in E_{y,r}$, we have $\kappa(e)<\nicefrac{1}{\rho_{\eps}}$. Now, $\vec{y},\vec{r}$ are such that $w(\kappa(E_{y,r}))>\textsf{mwm}(G)\cdot \log n$, then none of the edges in $E_{y,r}$ were sampled. Note that in this case, $\kappa(E_{y,r})\geqslant \textsf{mwm}(G)\cdot \log n\cdot W^{-1}$. Let $\mathcal{E}_1$ denote the event that $\vec{y},\vec{r}$ are returned by \textsc{Static-Weighted-Match}($G_s,\eps$) and $\mathcal{E}_2$ be the event that $E_{y,r}$ is not sampled into $G_s$. Then, observe that in this case, 
\begin{align*}
\prob{\mathcal{E}_1}&\leqslant \prob{\mathcal{E}_2}\\
                                 &\leqslant \prod_{e\in E_{y,r}}(1-p(e))\\
                                 &\leqslant \exp\paren{-\sum_{e\in E_{y,r}}\kappa(e)\cdot \rho_{\eps}}\\
                                 &\leqslant \exp\paren{-\frac{ \textsf{mwm}(G)}{W}\cdot \rho_{\eps}}
\end{align*}
Now, in order to upper bound this, we take a union bound over all $\card{\mathcal{D}}$ many possible duals. From the previous discussion these we know that $\card{\mathcal{D}}\leqslant \exp\paren{\frac{10}{\eps^2}\cdot \frac{\textsf{mwm}(G)}{W}\cdot \log n}$. This concludes the proof.
 \end{proof}

\subsection{Algorithm for \textsc{Weighted-Frac-Match}()}

The final ingredient we need for \textsc{WeightedM-or-E*}() is the following lemma, which computes a fractional matching on low capacity edges.

\begin{lemma2}\label{lem:Hungarian2}
Consider a weighted bipartite multigraph $G$ with edge capacity function $\kappa$, and edge weights in $\set{1,2,\cdots, W}$. There is an algorithm that in $O(m\cdot W\cdot \log n\cdot \nicefrac{1}{\eps})$ time finds a fractional matching $\vec{x}$ obeying the capacity function $\kappa$ such that $\sum_{e\in E}w(e)\cdot x(e)\geqslant (1-\eps)\cdot \textsf{mwm}(G,\kappa)$.
\end{lemma2}

In order to come up with this algorithm, we first recall the maximum weight capacitated fractional matching linear program for \textbf{bipartite graphs}, and its corresponding dual.

\begin{equation}
\begin{array}{ll@{}ll}
\text{maximize}  & \displaystyle\sum\limits_{e\in E} w(e)\cdot x(e)  \\
\text{subject to}& \displaystyle x(e)\leqslant \kappa(e)& & \forall e\in E \\
& \displaystyle \sum_{e\ni v} x(e) \leqslant 1 & & \forall v\in V
\end{array}
\end{equation}

\begin{equation}
\begin{array}{ll@{}ll}
\text{minimize}  & \displaystyle\sum\limits_{v\in L} y(v) +\sum\limits_{u\in R} y(u) +\sum_{e\in E} z(e)\cdot \kappa(e) &  \\
\text{subject to}& \displaystyle yz(e)\geqslant w(e) &  \forall e\in E 
\end{array}
\end{equation}

Here, $yz(e)=y(u)+y(v)+z(e)$, where $u,v$ are the two endpoints of $e$. Additionally, before stating our algorithm, we give some definitions.

\begin{definition2}[Residual Graph]
Given a bipartite graph $G=(L,R,E)$, with a capacity function $\kappa$ on the edges of the graph. Let $\vec{x}$ be a fractional matching obeying $\kappa$. We define $G_x$ to be the residual graph with respect to $\vec{x}$. In particular, this is a directed graph, where corresponding to each undirected edge $e\in G$, there are two directed edges, $e_f$ (forward edges, directed from $L$ to $R$), and $e_b$ (backward edge, directed from $R$ to $L$). Moreover, $e_f$ has a residual capacity of $\kappa(e)-x(e)$, and $e_b$ has a capacity of $x(e)$. Thus, if an edge $e$ is saturated, that is $x(e)=\kappa(e)$, then $e_f$ is not in $G_x$. Similarly, if $x(e)=0$, then $e_b$ is not in $G_x$.  
\end{definition2}

\begin{definition2}[Free Vertices]
Given a graph $G$, and a fractional matching $\vec{x}$, we say that a vertex $v$ is free with respect to $\vec{x}$ if $\sum_{e\ni v}x(e)<1$.
\end{definition2}

\begin{definition2}
    An augmenting path in $G_x$, will refer to a path starting and ending in a free vertex, and all the intermediate vertices on this path will be saturated.
\end{definition2}

Before stating our algorithm, we will state the invariants that the algorithm will maintain. Subsequently, we will show that maintaining these invariants throughout implies that the algorithm will output a matching satisfying the approximation guarantees of \Cref{lem:Hungarian2} and in the desired runtime. 

\begin{remark2}
We round down $\eps$ so that $\nicefrac{1}{\eps}$ is an integer.
\end{remark2}

\begin{property}\label{prop:invariants}
Our algorithm will at all times maintain, a fractional matching $\vec{x}$, and the dual variables $y(u)$ for all $u\in V$, and $z(e)$, for all $e\in E$ with the following properties.
\begin{enumerate}
    \item \textbf{Granularity:} Throughout the algorithm, the duals $y(u)$ for all $u\in V$, and $z(e)$ for all $e\in E$, are multiples of $\eps$. \label{item:gran}
    \item \textbf{Domination:} For all directed edges $e_f$ and $e_b$ in $G_{x}$, we have $yz(e_f)\geqslant w(e)-\eps$, and $yz(e_b)\geqslant w(e)-\eps$.\label{item:domin}
    \item \textbf{Tightness:} For all backward edges $e_b$, we additionally have the property that $yz(e_b)\leqslant w(e_b)+\eps$. \label{item:tightness}
    \item \textbf{Free Duals:} The duals for free vertices $u\in L$ are equal, and are at most the dual values for other vertices. The free vertex duals in $R$ are always $0$. The algorithm terminates when the free vertex duals in $L$ are all $0$. \label{item:freeduals}
    \item \textbf{Complementary Slackness:} If $z(e)>0$ for some $e\in E$, then $x(e)=\kappa(e)$. \label{item:slackness}
\end{enumerate}
\end{property}

\begin{definition2}[Eligible Edges]
    Let $\vec{x}$ be the current fractional matching maintained by the graph, and let $G_{x}$ denote the residual graph with respect to $\vec{x}$. A forward edge $e_f$ is eligible if $yz(e_f) = w(e)-\eps$, and backward edge $e_b$ is said to be eligible if $yz(e_b)=w(e)+\eps$. We use $G^t_x$ to denote the eligible subgraph of $G_x$. 
\end{definition2}

\begin{lemma2}
The fractional matching $\vec{x}$ returned by the algorithm mentioned in \Cref{prop:invariants} has $\sum_{e\in E}w(e)\cdot x(e)\geqslant (1-\eps)\cdot \textsf{mwm}(G)$.
\end{lemma2}

\begin{proof}
    In order to see this, consider the dual variables returned by the algorithm. Note that if we increase each of the duals by a factor of $(1+\eps)$, then the dual solutions $\vec{y},\vec{z}$ maintained by the algorithm are feasible dual solutions to the dual program (this is by \Cref{prop:invariants}\ref{item:domin}). Thus, by weak duality, we have,
    \begin{align*}
        \sum_{u\in V}y(u)+\sum_{e\in E} z(e)\cdot \kappa(e)\geqslant (1-\eps)\cdot \textsf{mwm}(G)
    \end{align*}
    Due to complementary slackness, (\Cref{prop:invariants}\ref{item:slackness}
    ), we have,
    \begin{align*}
        \sum_{u\in V} y(u) +\sum_{e\in E} x(e)\cdot z(e) \geqslant (1-\eps)\cdot \textsf{mwm}(G)
    \end{align*}
Due to the fact that free vertex duals, at the end of the algorithm are zero (\Cref{prop:invariants}\ref{item:freeduals}), we have, 
\begin{align*}
    \sum_{u\in V}\sum_{e\ni u} x(e)\cdot y(u)+\sum_{e\in E} z(e)\cdot x(e)\geqslant (1-\eps)\cdot \textsf{mwm}(G)
\end{align*}
Simplifying the above expression, we have,
\begin{align*}
    \sum_{e\in E}yz(e)\cdot x(e)\geqslant (1-\eps)\cdot \textsf{mwm}(G)
\end{align*}
Using \Cref{prop:invariants}\ref{item:tightness}, that is, tightness, we have, 
\begin{align*}
    \sum_{e\in E} (1+4\eps)\cdot w(e)\cdot x(e)\geqslant (1-\eps)\cdot \textsf{mwm}(G)
\end{align*}
This shows the claim.
\end{proof}

We now want to state the precise algorithm.
\begin{algorithm}[H] 
	\algorithmicrequire{ A bipartite multigraph $G$ with capacity and weight edge functions $\kappa$ and $w$.}\\
	\algorithmicensure{ A fractional matching $\vec{x}$ with $\sum_{e\in E}w(e)\cdot x(e)\geqslant (1-\eps)\cdot \textsf{mwm}(G,\kappa)$}
	\caption{\textsc{Weighted-Frac-Matching}($G,\kappa,\eps$)}
	\begin{algorithmic}[1]
	\Procedure{Initialization:}{}
		\State $x(e)\leftarrow 0$ for all $e\in E$ \Comment{Initializing $\vec{x}$}
		\State $y(u)\leftarrow W-\eps$ for all $u\in L$ \Comment{Initializing left vertex duals.}
		\State $y(u)\leftarrow 0$ for all $u\in R$ \Comment{Initializing right vertex duals.}
		\State $z(e)\leftarrow 0$ for all $e\in E$ \Comment{Initializing edge duals.}
	\EndProcedure
		\While{$y$-values of free left vertices are greater than $0$}
		\For{$e_b\in G^t_x$ with $z(e)>0$}
		\State $z(e)\leftarrow z(e)-\min\set{z(e), yz(e)-w(e)+\eps}$. \Comment{Either $z(e)=0$, or $e\notin G^t_x$ as $yz(e)=w(e)-\eps$} \label{item:zadj} 
		\EndFor
		\State Update $G^t_x$
		\Procedure{Augmentation:}{} 
		\State Find the maximal set $\mathcal{P}$ of augmenting paths in $G^t_x$. \label{item:augpath}
		\State Augment along $\mathcal{P}$. Update $\vec{x}$ and $G^t_x$.  \label{item:augmentation}
		\EndProcedure
	\Procedure{Dual Adjustment:}{}
		\State Let $Z$ be the set of vertices reachable from free left vertices in $G^t_x$.
		\State For all ineligible edges $e_b$ directed from $R\setminus Z$ to $L\cap Z$, with $yz(e)=w(e)-\eps$, $z(e)\leftarrow z(e)+\eps$. \label{line:adjz}
		\State For all $u\in L\cap Z$, $y(u)\leftarrow y(u)-\eps$. \label{line:Ladj}
		\State For all $u\in R\cap Z$, $y(u)\leftarrow y(u)+\eps$.\label{line:Radj}
	\EndProcedure
		\EndWhile
		\end{algorithmic}
	 \label{alg:weightedfracmatching}
\end{algorithm}

We now show that \Cref{alg:weightedfracmatching} proves \Cref{prop:invariants}. Later, we will show runtime guarantees of \Cref{alg:weightedfracmatching}. Towards this, we start with the following observation.

\begin{observation2}\label{obs:simul}
    Let $e\in G$ be an edge, then $e_f$ and $e_b$ cannot simultaneously appear in $G^t_x$
\end{observation2}

\begin{definition2}
    Let $H$ and $G$ be directed, capacitated graphs with capacity functions $c_h$ and $c_g$ respectively. Then, we say $E(H)\subseteq E(G)$ if $e\in H$ implies $e\in G$, and $c_h(e)\leqslant c_g(e)$. 
\end{definition2}

\begin{claim2}\label{claim:noaugpaths}
After \Cref{item:augmentation}, there are no augmenting paths in the graph $G^t_x$. 
\end{claim2}
\begin{proof}
Let $\vec{y}$ denote the fractional matching after the augmentation step. Then, in order to show the claim, it is sufficient to show $E(G^t_y)\subseteq E(G^t_x)$. Consider any forward edge $e_f\in E(G^t_x)$, observe that $e_b\notin E(G^t_x)$ due to \Cref{obs:simul}. Since we don't change the duals, $e_b\notin E(G^t_y)$ as well. Finally, $e_f$ can either be augmented along or not, in either case, we have, $\kappa(e)-x(e)\geqslant \kappa(e)-y(e)$. Such an argument applies for any edge $e_b$ as well.\\ \\
From this, we can conclude that any augmenting path that is present in $G^t_y$ is in $G^t_x$ as well, and this contradicts the fact that we found a maximal set of augmenting paths $\mathcal{P}$.
\end{proof}

We now show that \textbf{granularity} property holds.
\begin{claim2}\label{proof:gran}
    Throughout the algorithm, \Cref{prop:invariants}\ref{item:gran} holds.
\end{claim2}
\begin{proof}
We show this property by induction. Note that at the beginning of the algorithm, this property holds since $y(u)=W-\eps$ for all $u\in L$. Since $\nicefrac{1}{\eps}$, and $W$ are integers, this claim holds for all $y(u)$ with $u\in L$. Moreover, $y(u)=0$ for all $u\in R$, and $z(e)=0$ for all $e\in E$, thus, these are also multiples of $\eps$. Next, look at each of the steps where the duals are modified. Consider the step \Cref{item:zadj}, if $z(e)$ was a multiple of $\eps$ before this step, then it is a multiple of $\eps$ after this step as well since $w(e)$ is an integer. \\ \\
Finally, consider the \textsc{Dual Adjustment} procedure. If the duals are multiples of $\eps$ initially, then they are multiples of $\eps$ even after this step, since we modify the duals by $\eps$ amount. 
\end{proof}
We now show that \textbf{domination} property holds.
\begin{claim2}
Throughout the algorithm, \Cref{prop:invariants}\ref{item:domin} holds.
\end{claim2}
\begin{proof}
We show this property by induction. Note that initially, for all $u\in L$, $y(u)=W-\eps$, so domination condition holds for all edges initially. Now consider the step \Cref{item:zadj}, after this step, $yz(e)\geqslant w(e)-\eps$, which satisfies domination for $e$, and since we only modified $z(e)$, this doesn't affect any other edges. \\ \\
Finally, consider the dual adjustment procedure. Consider $e_f$ between $L\cap Z$ and $R\setminus Z$. Note that $e_f$ has to be ineligible, thus $yz(e)>w(e)-\eps$, and due to granularity, we can deduce that $yz(e)\geqslant w(e)$. Thus, the dual adjustment step preserves domination for such edges. Next, consider a backward edge $e_b$ between $L\cap Z$ and $R\setminus Z$. If $yz(e)\geqslant w(e)$, then dual adjustment step preserves domination. On the other hand, if $yz(e)=w(e)-\eps$, then we increment $z(e)$ as well, therefore domination is still preserved. Consider edges between $R\cap Z$ and $L\setminus Z$, since duals of $R\cap Z$ are incremented, and the duals of $L\setminus Z$ are unchanged, the domination property is preserved for such edges. Edges between $L\cap Z$ and $R\cap Z$ are not affected. 
\end{proof}

We now show the \textbf{tightness} property. 

\begin{claim2}
Throughout the algorithm, \Cref{prop:invariants}\ref{item:tightness} property holds.
\end{claim2}
\begin{proof}
    We prove this by induction. At the start of the algorithm, there are no backward edges, so tightness property holds trivially. Consider \Cref{item:zadj}, this step only reduces the dual, so it can't affect the tightness property. Consider the dual adjustment procedure. Here, \Cref{line:adjz} can affect the tightness property. However, this line only applies to backward edges $e_b$ which have $yz(e)=w(e)-\eps$, thus after dual adjustment $yz(e)=w(e)$. \Cref{line:Radj} can affect backward edges between $L\setminus Z$ and $R\cap Z$. However, all such edges $e_b$ must be ineligible. Thus, $yz(e)<w(e)+\eps$, which by granularity implies that $yz(e)\leqslant w(e)$. Thus, after the dual adjustment step, $yz(e)=w(e)+\eps$.  
\end{proof}

We now show the \textbf{free duals} property.

\begin{observation2}
    Throughout the algorithm, \Cref{prop:invariants}\ref{item:freeduals} holds.
\end{observation2}
\begin{proof}
    This is evident from the fact that all free vertices of $L$ are contained in $Z$, and duals of all these vertices is decremented by the same amount. Similarly, \Cref{claim:noaugpaths} suggests that all free vertices of $R$ are contained in $R\setminus Z$. Since, free vertex duals on the right are initially $0$, and are never incremented since they are contained in $R\setminus Z$. Thus, they remain $0$ throughout. 
\end{proof}

Finally, we show \textbf{complementary slackness}.

\begin{claim2}
Throughout the algorithm, \Cref{prop:invariants}\ref{item:slackness} holds.
\end{claim2}
\begin{proof}
    We will again show this by induction. This property holds at the beginning of the algorithm since $z(e)=0$ for all $e\in E$. Observe that this property may be violated if $z(e)>0$ and $e_b$ is present in $G^t_x$, and $e_b$ is used in some augmenting path. However, \Cref{item:zadj} ensures this doesn't happen. On the other hand, in \Cref{line:adjz}, we are incrementing $z(e)$, and this may violate complementary slackness as well. However, we show that for such an edge $x(e)=\kappa(e)$. To see this, note that $yz(e)=w(e)-\eps$, so if $x(e)<\kappa(e)$, then $e_f\in G^t_x$. But this cannot happen since we can't haven any eligible forward edges from $L\cap Z$ to $R\setminus Z$. Thus, this step doesn't contradict complementary slackness.
\end{proof}

Now, we focus on the runtime analysis.

\subsection{Runtime}
To prove the guarantees on the runtime of \Cref{alg:weightedfracmatching}, we start with the following structural lemma about $G^t_x$

\begin{observation2}\label{obs:addingedges}
    After dual adjustment step, there are only forward edges between $L\cap Z$ and $R\setminus Z$, and backward edges between $R\cap Z$ and $L\setminus Z$.
\end{observation2}
\begin{proof}
Suppose there is a backward edge $e_b$ between $L\cap Z$ and $R\setminus Z$ after the dual adjustment step. This implies, that after the dual adjustment step, $yz(e_b)=w(e)+\eps$. On the other hand, prior to the dual adjustment step, we had that $yz(e_b)=w(e)+2\eps$. However, this would contradict the tightness property (see \Cref{prop:invariants}\ref{item:tightness}). Similarly, suppose there is a forward edge $e_f$ between $R\cap Z$ and $L\setminus Z$ after the dual adjustment step. Then, this would imply that $yz(e_f)=w(e)-\eps$. Thus, before the dual adjustment step, $yz(e_f)=w(e)-2\eps$, which would contradict the domination property (see \Cref{prop:invariants}\ref{item:domin}).
\end{proof}

\begin{lemma2}\label{lem:acyclic}
The graph $G^t_x$ that is fed into \textsc{Augmentation} procedure is acyclic.
\end{lemma2}
\begin{proof}
We will show this by induction. Observe that at the beginning of the algorithm, there are no back edges, so the graph is acyclic. We now show that if no cycles are present in the graph, then dual adjustment and augmentation procedures cannot create cycles.
\begin{enumerate}
    \item First consider \Cref{item:zadj}, observe that if $e_b\in G^t_x$ is subjected to this, then $x(e)=\kappa(e)$ due to complementary slackness. Thus $e_f$ does not exist, and cannot be added to $G^t_x$ due to this modification. Moreover, this modification does not affect any other edge apart from $e$. Thus, if the graph $G^t_x$ was acyclic before, then it remains acyclic after \Cref{item:zadj}.
    \item Let $\vec{y}$ be the fractional matching after augmentation step. As we saw in \Cref{claim:noaugpaths}, $E(G^t_y)\subseteq E(G^t_x)$. Thus, if $G^t_x$ is acyclic, then $G^t_y$ is acyclic as well.
    \item Now consider \Cref{line:adjz}. By induction hypothesis, there are no cycles in $G^t_x$ before \Cref{line:adjz} was executed. Suppose a cycle $C$ exists in $G^t_x$ after this step, then $C$ contains either $e_f$ or $e_b$, where $e$ is an edge subjected to \Cref{line:adjz}. However, for $e_b$ subjected to this, $\kappa(e)=x(e)$, and therefore, $e_f$ cannot exist in $G^t_x$. Moreover, $e_b\notin G^t_x$ since before the dual modification $yz(e)=w(e)-\eps$, and after dual adjustment, $yz(e)=w(e)$.
    \item Observe that by induction hypothesis, there are no cycles in $G^t_x$ before \Cref{line:Ladj} and \Cref{line:Radj}. Note that these steps preserve eligibility of any edge between $L\cap Z$ and $R\cap Z$ and those between $L\setminus Z$ and $R\setminus Z$. Thus, if a new cycle is created after \Cref{line:Ladj} and \Cref{line:Radj}, then it must contain a directed edge from $Z$ to $V\setminus Z$, and a directed edge from $V\setminus Z$ to $Z$. However, \Cref{obs:addingedges} suggests that after this step, the latter edges are not in $G^t_x$. This shows our claim.
\end{enumerate}
\end{proof}

We now proceed with runtime guarantees.

\begin{observation2}\label{obs:iterations}
    Observe that by definition of augmenting paths, a matched vertex is never made free. Thus, by step \Cref{line:Ladj}, at the end of each iteration of the while loop, the value of free duals in $L\cap Z$ drops by $\eps$. This implies that the total number of iterations is $O(\frac{W}{\eps})$.
\end{observation2}

\begin{lemma2}\cite{SleatorT83}\label{lem:findingaugpaths}
If the residual graph is acyclic, then we can find a maximal set of augmenting paths in time $O(m\log n)$.
\end{lemma2}

\begin{lemma2}
The total runtime of the algorithm is $O(\nicefrac{m}{\eps}\cdot W\cdot \log n)$.
\end{lemma2}
\begin{proof}
    In a particular iteration of a while loop, we see how each of the steps contribute to the runtime. Consider step \Cref{item:zadj}, this takes time $O(m)$ and similarly, the subsequent step of updating $G^t_x$ 
    takes time $O(m)$ as well. Since $G^t_x$ is acyclic (by \Cref{lem:acyclic}), we can use \Cref{lem:findingaugpaths} to find the maximal set of augmenting paths in time $O(m\cdot \log n)$. Finally, the dual adjustment steps can also be accomplished in time $O(m)$. From \Cref{obs:iterations} implies the total runtime is $O(\nicefrac{m}{\eps}\cdot W\cdot \log n)$.
\end{proof}
 
\section{Algorithm \textsc{WeightedM-or-E*()}}
 
 In this section, we now put everything together and describe the algorithm \textsc{WeightedM-or-E*}(). First, some definitions are in order. 
 
 \begin{definition2}\label{def:lowcapedges}
 Given a matching $M$, define $E^{M}_{L}(G,\kappa)=E_{L}(G,\kappa)\cap M$, and $V^{M}_{L}$ to be the set of vertices that are the endpoints of $E^{M}_{L}(G,\kappa)$. 
 \end{definition2}

\begin{definition2}
    Given a multigraph $G$ with weight function $w$ and capacity function $\kappa$ on the edges of $G$, we define the bipartite double cover of $G$, denoted $\textsc{bc}(G)$, with capacity function $\kappa_{\textsc{bc}}$ and weight function $w_{\textsc{bc}}$ as follows:
    \begin{enumerate}
        \item For every vertex $v\in V(G)$, make two copies $v$ and $v'$ in $V(\textsc{bc}(G))$. 
        \item If $e$ is an edge between $u,v$, then we add an edge $e'$ between $u$ and $v'$ and an edge $e''$ between $u'$ and $v$. Moreover, $w_{\textsc{bc}}(e')=w_{\textsc{bc}}(e'')=w(e)$ and $\kappa_{\textsc{bc}}(e')=\kappa_{\textsc{bc}}(e'')=\kappa(e)$.
    \end{enumerate}
\end{definition2}

\begin{observation2}\label{lem:matchingbc}
For any weighted graph $G$, with capacity function $\kappa$, we have, $\textsf{mwm}(\textsc{bc}(G),\kappa_{\textsc{bc}})\geqslant 2\cdot \textsf{mwm}(G,\kappa)$.
\end{observation2}

 
 \begin{lemma2}\label{lem:Hungarian}
 Given a weighted multigraph $G$ (possibly non-bipartite) with capacity function $\kappa$, $\eps\in (0,1)$, with edge weights from $\set{1,2,\cdots, W}$, and with the property that for all $i\in [W]$, and for all vertices $u,v\in V$ we have, $\kappa(D_{i}((u,v)))\leqslant \nicefrac{1}{\alpha_{\eps}}$, then there is an algorithm \textsc{Weighted-Frac-Match-General}() that takes as input $G, \kappa, w,\eps$ and outputs a fractional matching $\vec{x}$ such that $\sum_{e\in E}w(e)\cdot x(e)\geqslant (1-\eps)\cdot \textsf{mwm}(G,\kappa)$, and further $\vec{x}$ obeys the capacities and the odd set constraints. 
 \end{lemma2}
 \begin{proof}
     In the previous section, we saw the algorithm \textsc{Weighted-Frac-Match}() that solved this problem for bipartite graphs. For the case of general graphs, we proceed as follows. We run \textsc{Weighted-Frac-Match}($\textsc{bc}(G),\kappa_{\textsc{bc}},w_{\textsc{bc}},\eps$), and obtain a fractional matching $\vec{z}$. To translate this into a valid fractional matching $\vec{x}$ in $G$, do the following: for each $e\in G$, consider $e',e''\in \textsc{bc}(G)$, and let $x(e)=\frac{z(e')+z(e'')}{2}$. Note that $\vec{z}$ is a fractional matching, since $\vec{x}$ is and moreover since $\vec{x}$ satisfies capacity constraints since $\vec{x}$ does. Applying \Cref{lem:matchingbc}, we know that $\sum_{e\in E}w(e)\cdot x(e)\geqslant (1-\eps)\cdot \textsf{mwm}(G,\kappa)$. Moreover, $\frac{\vec{x}}{1+\eps}$ satisfies all odd set constraints since it satisfies the premise of \Cref{obs:litteflow}.
 \end{proof}
 
 We now state our main algorithm.

\begin{algorithm}[H]
	\caption{\textsc{WeightedM-or-E*}($G,\kappa,\eps,\mu,w$)}
	\begin{algorithmic}[1]
		\State Include each edge $e\in E(G)$ independently with probability $p(e)=\kappa(e)\cdot \rho_{\eps}$ into graph $G_s$.
		\State Let $M,\vec{y},\vec{r}$ be the output of \textsc{Static-Weighted-Match}($G_s,\eps$). \Comment{Phase 1}.
		\If{$w(M)\leqslant (1-6\eps)\cdot\textsf{mwm}(G)$}
		\State Return $E^*=\set{e\mid yr(e)<(1-\eps)\cdot w(e)}$. \Comment{Phase 3}
		\Else \Comment{Phase 2}
		\State $M_I\leftarrow M\setminus E_{L}(G,\kappa)(G,\kappa)$\label{line:defineMI}
		\State $\vec{y}\leftarrow M_I^{D}$\Comment{Convert $M_I$ into a matching on multigraph}
		\State $\vec{x}\leftarrow \textsc{Weighted-Frac-Match-General}(G[V^M_L]\cap E_{L}(G,\kappa),\kappa^+, \eps,w)$ \label{item:wfm}
		\EndIf
		\State Return $\vec{y}+\vec{x}$. 
		\end{algorithmic}
	\label{alg:WeightedMorE}
\end{algorithm}

We restate \Cref{lem:MorE}, then proceed to show that \Cref{alg:WeightedMorE} satisfies the conditions of the lemma. 

\MorE*

\begin{proof}
We first show the runtime of the algorithm. First note that $G_s$ can be computed in $O(m)$ time since this only involves sampling edges independently. Moreover, by \Cref{lem:mwm}, we know that the runtime of \textsc{Static-Weighted-Match}($G,\eps$) is $O(\nicefrac{m}{\eps}\cdot \log \nicefrac{1}{\eps})$. Additionally, from \Cref{lem:setE*}, we can conclude that we can obtain set $E^*$ using the output of \textsc{Static-Weighted-Match}($G_s,\eps$) and the time taken to do this is $O(m)$. Finally, from \Cref{lem:Hungarian}, we can conclude that we can run Line \ref{item:wfm} in $O(\nicefrac{m}{\eps}\cdot \log\nicefrac{1}{\eps}\cdot W\cdot \log n)$. \\ \\
Now, we turn to show \Cref{lem:MorE}\ref{item:MorEa}. In this case, first note that $V_{L}^M$ and $V(M_I)$ are disjoint. This follows from \Cref{line:defineMI} and \Cref{def:lowcapedges}. Thus, since $\vec{y}, \vec{x}$ are fractional matchings, we can conclude that $\vec{z}$ is also a fractional matching.  Note that we land in \Cref{lem:MorE}\ref{item:MorEa} if $w(M)\geqslant (1-\eps)\cdot \textsf{mwm}(G)$. From \Cref{lem:sampling2} we can conclude that for $H=E_{L}(G,\kappa)\cap G[V_{L}^M]$ and $H_s=E_{L}(G,\kappa)\cap G_s[V_{L}^M]$, we have that  $\textsf{mwm}(H_s)\leqslant \textsf{mwm}(H,\kappa^+)+\eps\cdot \textsf{mwm}(G) $. Thus, we have,
\begin{align*}
\sum_{e\in E}y(e)\cdot w(e)+\sum_{e\in E} x(e)\cdot w(e)&\geqslant w(M_I)+ \textsf{mwm}(H,\kappa^+)-\eps\cdot \textsf{mwm}(G)\\
&\geqslant w(M_I)+ \textsf{mwm}(H_s)-2\eps\cdot \textsf{mwm}(G)\\
&=w(M)-2\eps\cdot \textsf{mwm}(G)\\
&\geqslant(1-\eps)\cdot \textsf{mwm}(G)-2\eps\cdot \textsf{mwm}(G)
\end{align*}
We now proceed to proof \Cref{lem:MorE}\ref{item:MorEa}\ref{item:MorEai} and \ref{item:MorEaii}. Consider any edge $e\in \text{supp}(\vec{z})$ with $w(e)=i$ and $\kappa(D_{i}(e))\leqslant \nicefrac{1}{\alpha_{\eps}^2}$.  Thus, we know that $e\in \text{supp}(\vec{x})$, and so by \Cref{lem:Hungarian}, we have that $z(e)=x(e)\leqslant \kappa^{+}(e)\leqslant \kappa(e)\cdot \alpha_{\eps}$ and $z(D_i(e))=x(D_i(e))\leqslant \kappa^{+}(D_i(e))\leqslant \kappa(D_{i}(e))\cdot \alpha_{\eps}$. Similarly, for any $e\in \text{supp}(\vec{z})$ with  $w(e)=i$, and $\kappa(D_i(e))>\nicefrac{1}{\alpha_{\eps}^2}$. We know that $e\in \textsf{supp}(\vec{y})$. By definition of $\vec{y}$, we have $z(e)=y(e)=\nicefrac{\kappa(e)}{\kappa(D_i(e))}$, and $z(D_i(e))=1$, this proves our claim.\\ \\
Finally, we prove \Cref{lem:MorE}\ref{item:MorEb}. First consider \Cref{lem:setE*}, this lemma implies that $\sum_{e\in E^*}\kappa(e)\cdot w(e)=O(\textsf{mwm}(G)\cdot \log n)$. Finally, by \Cref{claim:LotofMedges}, we have, $\sum_{e\in M\cap E^*} w(e) \geqslant w(M)-(1+\eps)^2\cdot \textsf{mwm}(G_s)$, thus, for any $M$ with $w(M)\geqslant (1-\eps)\cdot \textsf{mwm}(G)$, we have $\sum_{e\in M\cap E^*}w(e)\geqslant \eps\cdot \textsf{mwm}(G)$. This is implied by the fact that the algorithm returns $E^*$ when $\textsf{mwm}(G_s)\leqslant (1-5\eps)\cdot \textsf{mwm}(G)$. Finally, for all edges in $E^*$, $\kappa(e)<1$, otherwise they'd be sampled into $G_s$. 
\end{proof}

\section*{Acknowledgement}
Thank you to Jiale Chen, Aaron Sidford, and Ta-Wei Tu for coordinating posting to arXiv.\\ \\
This work is supported by Austrian
Science Fund (FWF): P 32863-N. This project has received funding from the European Research Council (ERC) under the
European Union’s Horizon 2020 research and innovation programme (grant agreement No 947702).

\appendix

\section{Missing Proofs}

We start with the following observation:
\begin{observation2}\label{obs:smallodd}
Suppose $\vec{x}$ is a fractional matching that satisfies odd set constraints for all odd sets of size smaller than $\nicefrac{3}{\eps}+1$, then the fractional matching $\vec{z}=\frac{\vec{x}}{(1+\eps)}$ satisfies odd set constraints for all odd sets. 
\end{observation2}
\begin{proof}
    Suppose $\vec{x}$ is a fractional matching that satisfies odd set constraints for all odd sets of size at most $\nicefrac{3}{\eps}$. Let $\vec{z}=\frac{\vec{x}}{1+\varepsilon}$. Now consider any odd set $B$ such that $\card{B}\geqslant \nicefrac{3}{\eps}+1$. Then, we have,
    \begin{align}\label{eqn:blossom}
    \frac{\card{B}}{\card{B}-1}=1+\frac{1}{\card{B}-1}\leqslant 1+\nicefrac{\eps}{3}.
    \end{align}
    Now since $\vec{x}$ satisfies fractional matching constraints, we have, $\sum_{e\in G[B]}x(e)\leqslant \nicefrac{\card{B}}{2}$. Thus, we have, using \Cref{eqn:blossom}:
    \begin{align*}
        \sum_{e\in G[B]}z(e)\leqslant \sum_{e\in G[B]}\frac{x(e)}{1+\eps}\leqslant \frac{\card{B}}{2\cdot (1+\eps)}\leqslant \frac{\card{B}-1}{2}
    \end{align*}
This proves the observation.
\end{proof}

\begin{observation2}\label{obs:litteflow}
Let $\vec{f}$ be a fractional matching on a multigraph, that puts flow at most $\eps$ between every pair of vertices $u,v\in V$. Then, $\vec{f}$ satisfies odd set constraints of sets of size at most $\nicefrac{1}{\eps}$. Thus, $\vec{f}$ satisfies all odd-set constraints.
\end{observation2}
\begin{proof}
Consider any odd set $B$ such that $\card{B}\leqslant \nicefrac{1}{\eps}$, then,
\begin{align*}
\sum_{e\in G[B]}f(e)\leqslant \sum_{u,v\in B} f((u,v))\leqslant \eps\cdot \frac{\card{B}\cdot (\card{B}-1)}{2}\leqslant \frac{\card{B}-1}{2}.
\end{align*}
From \Cref{obs:smallodd} we can conclude that $\frac{\vec{f}}{1+\eps}$ satisfies all odd set constraints. 
\end{proof}


\section{Reduction To Multigraphs with Heavy Matchings}
\label{sec:redc}
In this section, we prove the following theorem.

\simpletomulti*

\begin{observation2}
Let $G$ be a simple graph, then it contains at most $2^{O(k\cdot \log n)}$ matchings which have $k$ edges.
\end{observation2}
\begin{proof}
This is evident from the fact that there are at most ${n^2 \choose k}$ ways of choosing $k$ edges from among $n^2$ edges in total. 
\end{proof}

Now, in order prove \Cref{lem:simpletomulti}, we give a simple procedure that takes as input a simple graph, and outputs a multigraph on $O(\nicefrac{\textsf{mwm}(G)\cdot W}{\eps})$ vertices that preserves the weight of a fixed matching $M$ of $G$ with probability at least $1-\exp\paren{-O(\nicefrac{\card{V(M)}\cdot \eps^3}{W^3})}$.

\begin{algorithm}[H]
	\algorithmicrequire{ Graph $G$, a sparsification parameter $\tau$, and a parameter $\eps>0$}\\
	\algorithmicensure{ A multigraph $\mathcal{G}(\mathcal{V},\mathcal{E})$ with $\card{\mathcal{V}}=\tau$}
	\caption{\textsc{Vertex-Red-Basic}($G,\tau,\eps$)}
	\begin{algorithmic}[1]
		\State Partition $V$ into $\tau$ bins, $\mathcal{V}\coloneqq (B_1,B_2\cdots, B_{\tau})$, by assigning every vertex to one of the $\tau$ bins uniformly at random. 
		\State For a vertex $u$, let $B(u)$ denote the bin chosen for $u$. For any edge $(u,v)\in E$, if $B(u)\neq B(v)$, then add an edge $e_{u,v}$ between $B(u)$ and $B(v)$.
		\State Return the multigraph $\mathcal{G}(\mathcal{V},\mathcal{E})$.
	\end{algorithmic}
	\label{alg:vertexred}
\end{algorithm}

\begin{lemma2}\label{lem:analysisalg}
Let $G$ be a simple weighted graph with weights in $\set{1,\cdots, W}$, and let $\tau= \frac{8\cdot (1+\delta)\cdot W\cdot \textsf{mwm}(G)}{\eps}$. Let $\mathcal{G}=\textsc{Vertex-Red-Basic}(G,\tau,\eps)$, and let $M$ be any fixed matching of $G$. Then, there exists a matching $\mathcal{M}\subset M$ in $\mathcal{G}$ such that $w(\mathcal{M})\geqslant (1-\eps)\cdot w(M)$ with probability at least $1-2^{-\frac{\card{V(M)}\cdot \eps^3}{32\cdot W^3}}$.
\end{lemma2}
\begin{proof}
Let $\delta = \nicefrac{\varepsilon}{4}$, and let $t= 2\cdot W\cdot \card{V(M)}$. Note that by our definition, $\delta<\nicefrac{1}{2}$. Recall that there are $\tau= \frac{8\cdot (1+\delta)\cdot W\cdot \textsf{mwm}(G)}{\eps}$ bins, we combine these arbitrarily so that there are now $\nicefrac{t}{\delta}$ groups. Call these groups $Z_1,\cdots, Z_{\nicefrac{t}{\delta}}$.  Note that every vertex $v$ lands in $Z_i$ with probability $\nicefrac{\delta}{t}$. We call a group bad if it doesn't contain even one of the $2\cdot \card{V(M)}$ vertices of $M$. Let $X_i$ be the random variable that takes value $1$ if $Z_i$ is bad, and is $0$ otherwise. So, we have,
\begin{align*}
\prob{X_i=1}&=\paren{1-\frac{\delta}{t}}^{2\cdot \card{V(M)}}\\
&\leqslant e^{-\nicefrac{\delta}{W}}\\
&\leqslant 1-\nicefrac{\delta}{W} +\nicefrac{\delta^2}{2\cdot W^2}
\end{align*}
Let $X$ be a random variable denoting the number of bad groups, that is, $X=\sum_{i=1}^{\nicefrac{t}{\delta}} X_i$. Note that we have,
\begin{align*}
\expect{X}\leqslant\nicefrac{t}{\delta}(1-\nicefrac{\delta}{W} +\nicefrac{\delta^2}{2\cdot W^2}). 
\end{align*}
Note that $X$ is a sum of negatively correlated random variables. Thus, using Chernoff bound, we have,
\begin{align*}
\prob{X\geqslant (1+(\nicefrac{\delta}{W})^2)\cdot \nicefrac{t}{\delta}\cdot (1-\nicefrac{\delta}{W} +\nicefrac{\delta^2}{2\cdot W^2})}&\leqslant \exp\paren{-(\nicefrac{\delta}{W})^4\cdot \nicefrac{t}{\delta}\cdot (1-\nicefrac{\delta}{W} +\nicefrac{\delta^2}{2\cdot W^2}) }\\
&=\exp\paren{-\frac{\delta^3\cdot t}{W^4}+\frac{\delta^4\cdot t}{W^5}}\\
&\leqslant \exp\paren{-\frac{\delta^3\cdot t}{2\cdot W^4}}
\end{align*}
Thus, with probability at least $1- \exp\paren{-\frac{\delta^3\cdot t}{2\cdot W^4}}$, we have, $X\leqslant \nicefrac{t}{\delta}\cdot (1-\nicefrac{\delta}{W}+\nicefrac{2\delta^2}{W^2})$. This implies that the number of good groups is at least $2\card{V(M)}-\nicefrac{4\cdot \delta}{W}\cdot \card{V(M)}$. Now, we want to show there is a matching $\mathcal{M}$ among these groups such that $w(\mathcal{M})\geqslant (1-4\cdot \delta)\cdot w(M)$. To see this, in every group, we fix one vertex of $M$, and remove the rest. This way, we deleted at most $\nicefrac{4\cdot \delta}{W}\cdot \card{V(M)}$ edges of $M$, and these can contribute total weight at most $4\cdot \delta \cdot\card{V(M)}\leqslant 4\cdot \delta \cdot w(M)$ to $M$. Thus, we have in the remaining matching, which we call $\mathcal{M}$, $w(\mathcal{M})\geqslant (1-4\cdot \delta)\cdot w(M)$. 
\end {proof}

\begin{lemma2}\label{lem:matchpreserve}
Let \textsc{Vertex-Red()} be an algorithm that does independent runs of \textsc{Vertex-Red-Basic}(). Let $H_1,\cdots, H_{\lambda}$ be multigraphs obtained on independent runs of \textsc{Vertex-Red-Basic}() on input a weighted graph $G$ with weights in $\set{1,\cdots, W}$, $\tau\geqslant \frac{4\cdot \text{mwm}(G)\cdot W}{\eps}$ and $\eps\in(0,\nicefrac{1}{2})$, for $\lambda \geqslant \frac{1000\log n\cdot W^3}{\eps^4\cdot (1-\eps)}$. Then, with probability at least $1-O(\nicefrac{1}{n^2})$, for all matchings $M$ of $G$, there is a matching $M'$ in some $H_i$ such that $M'\subset M$ and $w(M')\geqslant (1-\eps)\cdot w(M)$.  
\end{lemma2}
\begin{proof}
Consider a fixed matching $M$, with $\card{M}=k$.  Then, with probability at least $1-\exp\paren{-\frac{\eps^3\cdot k}{32\cdot W^3}}$, a fixed $H_i$ contains $M'\subset M$, with $w(M')\geqslant (1-\eps)\cdot w(M)$.  Since each of the $H_j$ are independent runs of \textsc{Vertex-Red-Basic}(), with probability at least  $1-\exp\paren{-\frac{k\cdot \log n}{\eps}}$, some $H_i$ contains a matching $M'$ such that $M'\subset M$, and $w(M')\geqslant (1-\eps)\cdot w(M)$. Taking a union bound over all possible matchings with $k$ edges, and then over all possible values of $k$, we have the claim we need. 
\end{proof}
We now show the following. 
\begin{lemma2}\label{lem:matchpreserve2}
Let $G$ be a simple weighted graph, with weight function $w$ on the edges, and weights in $\set{1,2,\cdots, W}$. Let $H_1,\cdots, H_{\lambda}$ be output of independent runs of \textsc{Vertex-Red-Basic}($G,\lambda, \eps$), where $\lambda \geqslant \frac{1600\cdot\log (n\cdot W)}{\eps^4\cdot (1-\eps)}$. Consider an adversary that deletes edges from $G$. Suppose we simulate these deletions on each of the $H_i$. That is, if $(u,v)$ is deleted from $G$, then $e_{u,v}$ (if present) is deleted from each of the $H_i$. Let $\textsf{mwm}_{i}^t$ denote the weight of $\textsf{mwm}(H_i)$ at time $t$, and let $\textsf{mwm}^t$ denote the weight of $\textsf{mwm}(G)$ at time $t$. Suppose $\textsf{mwm}^t\geqslant (1-\eps)\cdot \textsf{mwm}^0$, then $\textsf{mwm}_i^t\geqslant (1-2\eps)\cdot \textsf{mwm}^0$ for some $i\in [\lambda]$.
\end{lemma2}
\begin{proof}
Let $M$ be a matching satisfying the assumption of the lemma. That is, let $M$ be a matching at time $t$ in $G$ such that $w(M) \geqslant (1-\eps)\cdot \textsf{mwm}^0$. Since the adversary is only performing deletions, thus we have, initially, before all deletions are performed, $w(M)\geqslant (1-\eps)\cdot \textsf{mwm}^0$. Thus, by \Cref{lem:matchpreserve}, we know that there is an $H_i$ containing $M'\subset M$ with the property that $w(M')\geqslant (1-\eps)\cdot w(M)$. Since all edges of $M$ survive at time $t$, this is true for $M'$ as well. Consequently, we have that at time $t$, $w(M')\geqslant (1-\eps)^2\cdot \textsf{mwm}^0$. This proves our claim. 
\end{proof}

\begin{proof}[Proof of \Cref{lem:simpletomulti}]
 \Cref{lem:simpletomulti}\ref{item:parta} follows from the properties of \textsc{Vertex-Red}(),  \Cref{lem:analysisalg}, and \Cref{lem:matchpreserve}. On the other hand, \Cref{lem:simpletomulti}\ref{item:partb} follows from the contrapositive of \Cref{lem:matchpreserve2}. 
\end{proof}

\section{Properties of Static-Weighted-Match}

In this section, we show \Cref{lem:mwm}, which we first restate.

\swm*

We first note that the properties \ref{item:mwma}-\ref{item:mwmb} and \ref{item:mwmd}-\ref{item:mwmf} are evident in Property 3.1 and the proof of Lemma 2.3 of \cite{DP14}. We now show that \ref{item:mwmc} also holds.

\begin{proof}[Proof of \Cref{lem:mwm}]
Let $\textsc{Basic-Static-Weighted-Match}()$ be the algorithm satisfying of \cite{DP14} satisfying \ref{item:mwma}-\ref{item:mwmb} and \ref{item:mwmd}-\ref{item:mwmf}. We now show how to obtain \textsc{Static-Weighted-Match}() from \textsc{Basic-Static-Weighted-Match}(). We run \textsc{Basic-Static-Weighted-Match}() with input $G$ and $\delta=\nicefrac{\eps}{3}$. Thus, we get a matching $M$ with $w(M)\geqslant (1-\nicefrac{\eps}{3})\cdot \textsf{mwm}(G)$. Moreover, the duals $\vec{y}$ and $\vec{z}$ output by the algorithm have $f(y,r)\leqslant (1+\nicefrac{\eps}{3})\cdot \textsf{mwm}(G)$. We consider the following procedure: for every $B\in \Omega$ with $z(B)>0$ and $\card{B}\geqslant \frac{3}{\eps}+1$, we increase $y(v)$ by $\frac{z(B)}{2}$ for every $v\in B$, and we decrease $z(B)$ to $0$. Thus, each $y(v)$ is still a multiple of $\eps$ and each $z(B)$ is still a multiple of $\eps$. We conclude that \ref{item:mwmd} still holds. Moreover, this transformation keeps the value of the dual constraint for every edge the same. Thus, we still satisfy \ref{item:mwme}. We update $\Omega$ by removing $B$ from it. The set $\Omega$ still remains laminar. Thus, \ref{item:mwmb} is still satisfied. We first observe that the time taken to do this is linear in the sum of the sizes of the odd sets $B$ with $z(B)>0$. Note that since $yz(V)$ sums to at most $(1+\nicefrac{\eps}{3})\textsf{mwm}(G)$, and non-zero $z(B)$ have value at least $\nicefrac{\eps}{3}$, this implies that the sums of the sizes of the odd sets is at most $\nicefrac{3}{\eps}\cdot (1+\nicefrac{\eps}{3})\cdot \textsf{mwm}(G)$. Thus, the time taken for the procedure is $O(\nicefrac{m}{\eps})$. Next, we observe that the value of $yz(V)$ changes by at most $\sum_{B:\card{B}\geqslant \nicefrac{3}{\eps}+1}\frac{z(B)}{2}$. This value is at most $\nicefrac{\eps}{3}\cdot (1+\nicefrac{\eps}{3})\cdot \textsf{mwm}(G)$, as shown by the following calculation.
	
	\begin{equation*}
	\begin{split}
		\paren{\frac{3}{\eps}}\cdot \sum_{B:\card{B}\geqslant \nicefrac{3}{\eps}+1}\frac{z(B)}{2}\leqslant \sum_{B:\card{B}\geqslant \nicefrac{3}{\eps}+1}\paren{\frac{\card{B}-1}{2}}\cdot z(B)&\leqslant (1+\nicefrac{\eps}{3})\cdot \textsf{mwm}(G)
	\end{split}
	\end{equation*}
	
	Thus, the new $yz(V)$ has value at most $(1+\nicefrac{\eps}{3})^2\cdot \textsf{mwm}(G)$. This implies that \ref{item:mwmf} holds. Finally, since every blossom $B$ of size at least $\nicefrac{3}{\eps}+1$ has $z(B)=0$, thus, \ref{item:c} holds. 
\end{proof}

\section{Matchings in Decremental Graphs with \textsf{mwm}($G$)$=O(\log n)$}
\label{section:smallmatching}

As mentioned in the introduction, the underlying assumption in our paper is that the decremental graph $G$ has $\textsf{mwm}(G)=\Omega(\log n)$ as mentioned in \Cref{assumption:largematching}. We now give a simple decremental algorithm for the case where the graph $G$ has $\textsf{mwm}(G)=O(\log n)$. We start by recalling \Cref{lem:guptapeng}, which allows us to focus on designing an algorithm for the case where the weights are integers in $\set{1,\cdots, W}$. Using this assumption, we have the following observation.

\begin{observation2}
If $G$ is an integer-weighted graph with weight function $w$ on the edges, with weights in $\set{1,2,\cdots,W}$. Then, $\textsf{mwm}(G)\geqslant \textsf{mcm}(G)$. Consequently, if $\textsf{mwm}(G)=O(\log n)$, then there is a vertex cover of $G$ of size $O(\log n)$. 
\end{observation2}

Given a graph $G$, we give a weighted matching sparsifier for $G$. This sparsifier is mentioned in \cite{GP13} for unweighted graphs, but the proof extends to weighted graphs as well, and we state it here for completeness.

\begin{definition2}
Given a graph $G$, and a vertex cover $S$, the core graph $H$ consists of the following edges:
\begin{enumerate}
\item All edges in $G[S]$.
\item For each vertex $v\in S$, $|S|+1$ heaviest edges from $v$ to $V\setminus S$. 
\end{enumerate}
\end{definition2}

\begin{lemma2}\cite{GP13}
Given a graph $G$, and a core graph $H$ of $G$, we have $\textsf{mwm}(H)=\textsf{mwm}(G)$.
\end{lemma2}
\begin{proof}
Suppose $\textsf{mwm}(G)>\textsf{mwm}(H)$. Let $M^*$ denote the optimal matching in $G$ that uses the most number of edges from $H$. There is an edge $e\in M^*\setminus H$. Observe that since $S$ is a vertex cover, and we have $G[S]\subseteq H$, this implies that $e$ has one end point in $u\in S$ and the other in $v\in V\setminus S$. Note that $e$ wasn't included into $H$ since there $u$ has at least $\card{S}+1$ edges incident on it in $H$, and for each of these edges $e'$, we have $w(e')\geqslant w(e)$. Next, observe that one of these neighbours of $u$ in $V\setminus S$ must be unmatched (call this $w$). This is because $S$ is a vertex cover, and there are no edges among vertices in $V\setminus S$, and consequently, at most $|S|$ vertices in $V\setminus S$ can be matched. Substituting $(u,v)$ with $(u,w)$ in $M^*$, we get a matching $M'$ in $G$ which uses even more edges from $H$ and $w(M')\geqslant w(M^*)$. This is a contradiction. 
\end{proof}

\begin{lemma2}
Consider a decremental graph $G$ with $\textsf{mwm}(G)=O(\log n)$. Then, there is a decremental algorithm for maintaining a $(1-\varepsilon)-\textsf{mwm}(G)$ in $O(\log^2n)$ update time.
\end{lemma2}
\begin{proof}
Given a decremental graph $G$ with $\textsf{mwm}(G)=O(\log n)$. We can in $O(m)$ time preprocess the graph to sort the edges of every vertex and store these as $n$ doubly linked lists. We can in $O(m)$ time compute a maximal matching of $G$, and using this in compute a vertex cover $S$ of $G$ with $|S|=O(\log n)$. Recall, the $G$ is decremental, therefore, $S$ always remains a vertex cover of $G$. Thus, we never need to recompute $S$. Next, we a compute the core graph $H$ in $O(\log^2n)$ time. Using a standard approximate weighted matching algorithm, we can compute a $(1-\varepsilon)$ approximate matching $M$ of $H$ in time $O(\log^2n)$ (see \Cref{lem:staticmatch}). As we process deletions, if an edge is deleted from $G\setminus H$, we just remove that edge from the list and do nothing. If an edge is deleted from $H$, then in addition to doing the above-mentioned step, we also delete the edge from $H$. If this was an edge from a vertex $u\in S$ to a vertex $v\in V\setminus S$, then we add to $H$, the next highest weighted edge from $u$ to $V\setminus S$ and update the doubly-linked list in $O(1)$ time. Finally, if this edge was also in $M$, we just recompute a new matching of $H$ from scratch in time $O(\log^2n)$ since $\card{E(H)}=O(\log^2n)$.
\end{proof}

\section{Rounding Weighted Fractional Matchings}
\label{sec:rounding}
In this section, we will prove \Cref{lem:sparsification}. More concretely, we give state the procedure \textsc{Sparsification}() that takes as input a fractional matching $\vec{x}$ of a simple graph $G$, and outputs a graph $H$, of size $\Tilde{O}(\mu(G))$. If $x(e)\leqslant \eps^6$ for all $e\in E(G)$, then, $H$ contains an integral matching of weight at least $(1-10\eps)\cdot \sum_{e\in E}w(e)\cdot x(e)$ in its support. The proof of this theorem is implicit in the work of \cite{Wajc2020}, but we describe it here for completeness. We first state the algorithm, and then describe some of its properties. Note that even though we work over a multigraph, and compute a fractional matching $\vec{x}$, we feed $\vec{x}^C$ into the rounding algorithm. Fractional matching $\vec{x}^C$ has the property that between every pair of vertices $u,v\in V$, there is at most one edge of a given weight. Thus, the total number of edges is at most $n^2\cdot W$. 


The algorithm uses a dynamic edge coloring algorithm as a subroutine. The update time of the subroutine is $O(\log n)$ in the worst case. The sparsification algorithm takes as input a fractional matching $\vec{x}$ and a parameter $\eps>0$. Then, it classifies $E(G)$ into classes as follows: $E_i=\set{e\mid x(e)\in [(1+\eps)^{-i}, (1+\eps)^{-i+1})}$. Additionally, we can limit the number of edge classes into $O(\log nW)$ many using the following argument.

\begin{observation2}
    We can assume that $\frac{\min_{e\in E} x(e)}{\max_{e\in E}x(e)}\geqslant \frac{\eps^2}{n^2\cdot W}$. 
\end{observation2}
\begin{proof}
Let $F=\set{e\in E\mid x(e)\leqslant \frac{\eps^2}{n^2\cdot W}\cdot \max_{e'\in E}x(e')}$. Note that $\sum_{e\in F}w(e)\cdot x(e)\leqslant \eps^2\cdot \max_{e'\in e}x(e')$. Thus, we can throw away these edges without hurting the weight of the fractional matching by more than an $1-\eps$ factor.
\end{proof}
From the above observation, we can conclude that the total number of edge classes are at most $\log(\nicefrac{n\cdot W^2}{\eps})$ many.
\begin{algorithm}
	\caption{\textsc{Sparsification}($\vec{x},\eps$)}
	\begin{algorithmic}[1]
		\State $d\leftarrow \frac{4\cdot \log\paren{\nicefrac{2}{\eps}}}{\eps^2}$
		\For{$i\in \set{1,2,\cdots, 2\log_{1+\eps}(\nicefrac{n\cdot W}{\eps})}$}
		\State Compute a $3\lceil (1+\eps)^i\rceil$-edge colouring $\Phi_i$ of $E_i$.
		\State Let $S_i$ be a sample of $3\cdot \min\set{\lceil d\rceil,\lceil (1+\eps)^i\rceil}$ colours without replacement in $\Phi_i$. \label{line:blah}
		\State Return $K=(V,\cup_{i}\cup_{M\in S_i}M)$
		\EndFor
	\end{algorithmic}
\end{algorithm}

We state the guarantees of the edge coloring subroutine. 

\begin{observation2}
	The size of $K$ output by \textsc{Sparsification}($\vec{x},\eps$) is,
	\begin{align*}
	\card{E(H)}=O\paren{\frac{\log (\nicefrac{n\cdot W}{\eps})}{\eps}\cdot d\cdot \mu(\textup{supp}(H))}
	\end{align*}
\end{observation2}

\begin{lemma2}\label{lem:edgecolour}\cite{BDHN18}
There is a deterministic dynamic algorithm that maintains a $2\Delta-1$ edge coloring of a graph $G$ in $O(\log n)$ worst case update time, where $\Delta$ is the maximum degree of the graph $G$. 
\end{lemma2}

\begin{observation2}\label{obs:sampling}
Let $\eps\in (0,\nicefrac{1}{2})$ and suppose the input to \textsc{Sparsification}($\vec{x},\eps$) is a matching $\vec{x}$ with $x(e)\leqslant \eps^6$, then, $\card{S_i}=3\cdot \lceil d\rceil$. 
\end{observation2}

To show that $K$ contains a matching of weight at least $(1-\eps)\cdot \sum_{e\in E(G)}w(e)\cdot x(e)$ in its support, we will show a random fractional matching $\vec{y}$ in $K$ that sends flow at most $\eps$ through each of its edges. Moreover, $\expect{\sum_{e\in E}w(e)\cdot y(e)}\geqslant (1-6\eps)\cdot \sum_{e\in E(G)}w(e)\cdot x(e)$. This will show the existence of a large matching that sends flow at most $\eps$ through each of its edges. To show this, we give some properties of the algorithm.

\begin{lemma2}\label{lem:probbounds}
Let $\eps\in (0,\nicefrac{1}{2})$ and let $\vec{x}$ be the input to \textsc{Sparsification}($\vec{x},\eps$) such that $x(e)\leqslant \eps^6$ for all $e\in E(G)$. Then, for every edge $e$, $\prob{e\in K}\in \bracket{\nicefrac{x(e)\cdot d}{(1+\eps)^2},x(e)\cdot d\cdot (1+\eps)}$.
\end{lemma2}
\begin{proof}
Since $\vec{x}$ has the property that for every $e\in E$, $x(e)\leqslant \eps^6$, this implies that we sample $3\cdot \lceil d\rceil$ from $\Phi_i$ colors for each $i$ (from Observation \ref{obs:sampling}). Thus, if we consider an edge $e\in E_i$, then, we have,
\begin{align*}
\prob{e\in K}=\frac{\lceil d\rceil}{\lceil(1+\eps)^i\rceil}\leqslant \frac{d\cdot (1+\eps)}{(1+\eps)^i}\leqslant d\cdot (1+\eps)\cdot x(e)
\end{align*}
The last inequality follows from the fact that $x(e)\in \bracket{(1+\eps)^{-i},(1+\eps)^{-i+1}}$. Moreover, we have,
\begin{align*}
	\frac{\lceil d\rceil}{\lceil (1+\eps)^i\rceil}\geqslant \frac{d}{(1+\eps)^i+1}
	\geqslant \frac{d}{(1+\eps)^{i+1}}
	\geqslant \frac{d\cdot x(e)}{(1+\eps)^2}
	\end{align*}
The first inequality follows from the fact that $\lceil d\rceil \geqslant d$, and $\lceil (1+\eps)^i\rceil\leqslant (1+\eps)^i+1$. The second inequality follows from the following reasoning.
\begin{align*}
(1+\eps)^{-i}\leqslant x(e)\leqslant \eps^6\leqslant \nicefrac{1}{d}\leqslant \eps
\end{align*}
Thus, $\nicefrac{1}{\eps}\leqslant (1+\eps)^i$, and $1\leqslant \eps\cdot (1+\eps)^i$, so the second inequality follows. The last inequality follows from the fact that $x(e)\in \bracket{(1+\eps)^{-i},(1+\eps)^{-i+1}}$.
\end{proof}

For an edge $e\in E(G)$, we define $X_e$ to be an indicator random variable that takes value $1$ if $e\in K$, and $0$ otherwise. 

\begin{observation2}\label{obs:negatcorr}
	For a vertex $v$, the variables $\set{X_e\mid e\text{ incident on }v}$ are negatively associated. 
\end{observation2}
\begin{proof}
	If the edges $e$ and $e'$ incident on $v$ belong in $E_i$ and $E_j$ where $i\neq j$, then $X_e$ and $X_{e'}$ are independent. On the other hand, if they belong in the same $E_i$, then recall we did an edge colouring on $E_i$, so $e'$ and $e$ got different colours. So, if the colour corresponding to $e$ is picked into $K$, then this reduces the probability of the colour corresponding to $e'$ being picked into $K$.
\end{proof}

\begin{lemma2}\label{cor:cond}
	From Observation \ref{obs:negatcorr} and \Cref{lem:probbounds}, we can conclude that for edges $e$ and $e'$ incident on $v$, 
	\begin{align*}
	\prob{X_e=1\mid X_{e'}=1}\leqslant \prob{X_e=1}\leqslant x(e)\cdot d\cdot (1+\eps)
	\end{align*}
\end{lemma2}

\begin{lemma2}
	For any vertex $v$ and edge $e'$ incident on $v$, the variables $\set{[X_e\mid X_{e'}]\mid e\text{ incident on }v}$ are negatively associated.
\end{lemma2}

Following theorem directly implies \Cref{lem:sparsification}. 

\begin{theorem2}\label{thm:largeintegral}
Let $K$ be the subgraph of $G$ output by \textsc{Sparsification}(), when run on the matching $\vec{x}$ with parameters $\eps\in (0,\nicefrac{1}{2})$. If $x(e)\leqslant \eps^6$ for all $e\in E$, then $K$ supports a fractional matching $\vec{y}$ such that $y(e)\leqslant \eps$, and $\sum_{e\in E}w(e)\cdot y(e)\geqslant (1-6\eps)\cdot \sum_{e\in E}w(e)\cdot x(e)$.
\end{theorem2}
\begin{proof}
	To show the theorem, we will describe a process that simulates \textsc{Sparsification}($\vec{x},\eps$), and outputs a random matching $\vec{y}$ such that $y(e)\leqslant \eps$ for every $e\in E$. Additionally, $\expect{\sum_{e\in E}w(e)\cdot y(e)}\geqslant (1-6\eps)\sum_{e\in E}w(e)\cdot x(e)$. As an intermediate step, let $z(e)=\frac{(1-4\eps)}{d}\cdot X_e$. So, we have,
	\begin{align*}
	\expect{z(e)}&\geqslant \expect{z(e)\mid X_e=1}\cdot \prob{X_e=1}\geqslant \frac{1-4\eps}{d}\cdot \frac{x(e)\cdot d}{(1+\eps)^2}\geqslant (1-6\eps)\cdot x(e)
	\end{align*}
	
The second to last inequality follows from \Cref{lem:probbounds}. Next define $\vec{y}$ as follows:
\begin{align*}
y(e)=\begin{cases}
0 \text{ if for one of the endpoints }v\text{ of }e,\  \sum_{e'\in v}z(e')>1\\
z(e) \text{ otherwise.}
\end{cases}
\end{align*}
Essentially, our procedure is creating matching $\vec{z}$ as follows: it assigns value $\frac{1-4\eps}{d}$ to $z(e)$ if $e$ was included in $K$, and $0$ otherwise. The value $y(e)$ is the same $z(e)$ in all cases except when $e$ is picked into $K$, but at one of the endpoints $v$, $\sum_{e'\ni v}z(e')>1$ , that is, the fractional matching constraint is violated at $v$. We show that it is unlikely that an edge (when picked) has one of its endpoints violated. Let $e'$ be an edge incident on $v$, we to bound the probability that $z(e')\neq y(e')$. 
\begin{align*}
\expect{\sum_{e\in v}z(e)\mid X_{e'}=1}&\leqslant \frac{1-4\eps}{d}+\expect{\sum_{e'\neq e, e\in v}z(e)\mid X_{e'}}\\
&\leqslant \eps+\sum_{e\in v}x(e)\cdot (1+\eps)\cdot d\cdot\paren{\frac{1-4\eps}{d}}\\
&\text{(Since $\nicefrac{1}{d}\leqslant \eps$ and from \Cref{cor:cond})}\\
&\leqslant (1-\eps)
\end{align*}
Note that at any end point of $e'$, conditioned on $e'$ being sampled, the expected sum of $z(e)$'s at that endpoint is upper bounded by $(1-\eps)$. Now, $z(e)$ is assigned value $0$ only if the sum of the $z(e)$'s deviates from the expected value by $\eps$. To see this, we want to compute $\var{[z(e)\mid X_{e'}]}$, where $e$ and $e'$ share an end point. Note that $[z(e)\mid X_{e'}]$ takes value $\frac{1-4\eps}{d}$ with probability $\prob{X_e\mid X_{e'}}$, otherwise it takes value $0$.
\begin{align*}
\var{\bracket{z(e)\mid X_{e'}}}&\leqslant \expect{\bracket{z(e)\mid X_{e'}}^2}\\
&\leqslant\paren{\frac{1-4\eps}{d}}^2\cdot \prob{X_e\mid X_{e'}}\\
&\leqslant \paren{\frac{1-4\eps}{d}}^2\cdot x(e)\cdot d\cdot (1+\eps)\\
&\text{(From \Cref{cor:cond})}\\
&\leqslant \frac{x(e)}{d}
\end{align*}
This implies that $\sum_{e\in v}\var{\bracket{z(e)\mid X_{e'}}}\leqslant \frac{1}{d}$. So, we want to compute the probability that the sum of the random variables $\set{\bracket{z(e)\mid X_{e'}}}$ deviates from the expected value by $\eps$. Applying \Cref{lem:concentrate}, we have,
\begin{align*}
\prob{\sum_{e\in v}\bracket{z(e)\mid X_{e'}}\geqslant \expect{\sum_{e\in v}\bracket{z(e)\mid X_{e'}}}+\eps}&\leqslant \exp\paren{\frac{-\eps^2}{2\paren{\frac{1}{d}+\frac{\eps}{3\cdot d}}}}\\
&\paren{\text{Since }z(e)\in \bracket{0,\nicefrac{1-4\eps}{d}}}\\
&\leqslant \exp\paren{-\eps^2\cdot 0.25\cdot d}\\
&\text{(Since }\eps\in (0,1))\\
&\leqslant \frac{\eps}{2}\\
&\text{(Since }d=\frac{4\cdot \log(\nicefrac{2}{\eps})}{\eps^2})
\end{align*}
Taking union bound over both endpoints, we know that $\prob{y(e)=z(e)\mid X_e=1}\geqslant (1-\eps)$. Thus, we have:
\begin{align*}
\expect{y(e)}&=\paren{\frac{1-4\eps}{d}}\cdot \prob{y(e)=z(e)}\\
					&=\paren{\frac{1-4\eps}{d}}\cdot \prob{y(e)=z(e)\mid X_e=1}\prob{X_e=1}\\
					&\geqslant \paren{\frac{1-4\eps}{d}}\cdot (1-\eps)\cdot \frac{x(e)\cdot d}{(1+\eps)^2}\\
					&\geqslant (1-7\eps)\cdot x(e)
\end{align*}
Thus, $\expect{\sum_{e\in E}w(e)\cdot y(e)}\geqslant (1-7\eps)\cdot \sum_{e\in E}w(e)\cdot x(e)$. Moreover, for all $e\in E$, $y(e)\leqslant \eps^6$. This proves our claim. 
\end{proof}

We now restate the lemma, and then show its proof. 

\lemuno*
\begin{proof}
Note that \ref{lem:sparseb} is implied by \Cref{thm:largeintegral}, and the fact that any fractional matching $\vec{y}$ with $y(e)\leqslant \eps$ for all $e\in E$, satisfies odd set constraints for all odd sets of size at most $\nicefrac{1}{\eps}$. To see \ref{lem:sparsea}, note that deleting $e$ from supp($\vec{x}$) corresponds to just deleting $e$ from $G_i$, and since our edge coloring algorithm is able to handle edge insertions and deletions in $\Tilde{O}_{\eps}(1)$ time, such updates can be handled in $\Tilde{O}_{\eps}(1)$ update time (see \Cref{lem:edgecolour}). Finally, if an update reduces $x(e)$ for some $e\in E$, then this corresponds to deleting $e$ from some $G_i$ and adding it to $G_j$ for some $j<i$. Thus, the edge colouring algorithms running on $G_i$ and $G_j$ have to handle an edge insertion and deletion respectively, and this can be done in $\Tilde{O}_{\eps}(1)$ time (see \Cref{lem:edgecolour}). 
\end{proof}

\newpage
\bibliographystyle{alpha}
\bibliography{references.bib}
\end{document}